\title{A Relativistic GRW Flash Process With Interaction}
\author{
Roderich Tumulka\footnote{Fachbereich Mathematik, Eberhard-Karls-Universit\"at, Auf der Morgenstelle 10, 72076 T\"ubingen, Germany. E-mail: roderich.tumulka@uni-tuebingen.de}
}
\date{February 11, 2020}
\newcommand{\be}{\begin{equation}}
\newcommand{\ee}{\end{equation}}
\newcommand{\Hilbert}{\mathscr{H}}
\newcommand{\scp}[2]{\langle #1|#2 \rangle}
\newcommand{\Bscp}[2]{\Bigl\langle #1\Big|#2 \Bigr\rangle}
\newcommand{\future}{\mathrm{future}}
\newcommand{\past}{\mathrm{past}}
\newcommand{\sdist}{\text{s-dist}}
\newcommand{\In}{\mathrm{In}}
\newcommand{\Out}{\mathrm{Out}}
\newcommand{\CCC}{\mathbb{C}}
\newcommand{\HHH}{\mathbb{H}}
\newcommand{\MMM}{\mathbb{M}}
\newcommand{\NNN}{\mathbb{N}}
\newcommand{\PPP}{\mathbb{P}}
\newcommand{\RRR}{\mathbb{R}}
\newcommand{\uell}{\underline{\ell}}
\newcommand{\uk}{\underline{k}}
\newcommand{\un}{\underline{n}}
\newcommand{\ux}{\underline{x}}
\newcommand{\uX}{\underline{X}}
\newcommand{\sA}{\mathscr{A}}
\newcommand{\sB}{\mathscr{B}}
\newcommand{\sC}{\mathscr{C}}
\newcommand{\sI}{\mathscr{I}}
\newcommand{\sN}{\mathscr{N}}
\newcommand{\sS}{\mathscr{S}}
\theoremstyle{theorem}
\newtheorem{prop}{Proposition}
\theoremstyle{definition}
\begin{document}

\maketitle

\begin{abstract}
In 2004, I described a relativistic version of the Ghirardi-Rimini-Weber (GRW) model of spontaneous wave function collapse for $N$ non-interacting distinguishable particles. Here I present a generalized version for $N$ interacting distinguishable particles. Presently, I do not know how to set up a similar model for indistinguishable particles or a variable number of particles. 
The present interacting model is constructed from a given interacting unitary Tomonaga-Schwinger type evolution between spacelike hypersurfaces, into which discrete collapses are inserted. I assume that this unitary evolution is interaction-local (i.e., no interaction at spacelike separation). The model is formulated in terms of Bell's flash ontology but is also compatible with Ghirardi's matter density ontology. It is non-local and satisfies microscopic parameter independence and no-signaling; it also works in curved space-time; in the non-relativistic limit, it reduces to the known non-relativistic GRW model.

\medskip

\noindent Key words: Ghirardi-Rimini-Weber (GRW) theory of spontaneous wave function collapse; relativity; Tomonaga-Schwinger equation.
\end{abstract}

\bigskip

\begin{center}
{\it To the memory of GianCarlo Ghirardi (1935--2018)}
\end{center}

\section{Introduction}

In this paper, I describe a relativistic model of spontaneous wave function collapse for $N$ distinguishable particles with interaction, thereby generalizing my 2004 model without interaction \cite{Tum06,Tum06c,Tum09}. The model involves, like the GRW model \cite{GRW86,Bell87} on which it is based, discrete jumps of the wave function with unitary evolution in between. I will describe the model in terms of Bell's flash ontology \cite{Bell87,bmgrw,Tum18} by specifying the joint probability distribution of all flashes, but it could also be set up using Ghirardi's matter density ontology \cite{BGG95,bmgrw} along the lines described in \cite{BDGGTZ14}. Neither choice of ontology makes the problem easier or more difficult. In the model, one can, as already suggested by Aharonov and Albert in 1981 \cite{AA81}, associate a wave function $\psi_\Sigma$ with every spacelike hypersurface $\Sigma$. For every flash in the past of $\Sigma$, a collapse operator gets applied to the wave function; thus, each collapse affects the wave function everywhere in the universe, although the model is fully relativistic. Since the flashes occur randomly, $\psi_\Sigma$ is a random wave function and can thus be regarded as subject to a stochastic time evolution. In contrast to Bohmian mechanics but like the non-interacting 2004 model, the present model does not invoke a preferred foliation (i.e., slicing) of space-time into spacelike hypersurfaces. 

Since, unlike Bohmian mechanics, the model does not involve trajectories in space-time, the word ``particle'' should not be taken literally. Rather, in this paper it means a space-time variable in the wave function (or in the configuration PVM on Hilbert space).

The interaction is incorporated by assuming the unitary part of the time evolution as given and including interaction. More precisely, I assume that a unitary Tomonaga-Schwinger type evolution between spacelike hypersurfaces is given and describe how to insert collapses in between unitary evolution operators. I assume that the unitary evolution is relativistic and interaction-local (i.e., involves no interaction terms between spacelike separated regions, see below). For example, such an evolution is rigorously known for $N$ Dirac particles in 1+1 dimensions with zero-range interaction \cite{Lie15a,Lie15c,LN15}. For another example, the $N$ particles could be taken to interact through a quantized field (which will neither be associated with local beables by itself nor with collapses).\footnote{However, quantized fields are usually mathematically ill defined due to ultraviolet divergence. It would be of interest to study carefully whether one of the few mathematically well defined evolutions for quantum fields (such as \cite{GJ68} in 1+1 dimensions) can be put to work here.} If the unitary evolution is non-interacting, then the model reduces, up to small deviations, to the 2004 version. In particular, the model is non-local, i.e., two spacelike separated events $a$ and $b$ can influence each other, although there is no fact about the direction of the influence (whether $a$ influenced $b$ or $b$ influenced $a$) \cite{Tum07}. Also like the 2004 version, the model obeys, up to small deviations, the condition that the distribution of the flashes up to a given spacelike hypersurface $\Sigma$ does not depend on external fields in the future of $\Sigma$; this condition is a microscopic analog of the condition known as ``parameter independence.''

Much of the difficulty of devising a model with interaction arises from the fact that the collapse operators associated with different particles do not generally commute, whereas they do in the non-interacting case. This leads to a question of how to order the operators in the formula defining the joint probability density of the flashes, all the while with a need to ensure that the density integrates up to 1. The procedure proposed here is, roughly speaking, based on ordering the operator factors associated with two flashes in the temporal order when they are timelike separated, while the operators essentially still commute when the flashes are spacelike separated, except for certain details arising from the width $\sigma$ of the collapses. 

Another difficulty arises precisely from the use of smeared-out collapse operators of width $\sigma$. Partly due to the use of different operator orderings depending on the space-time locations of the flashes, it turned out relevant to cut off the tails of the profile function, usually a Gaussian function of width $\sigma$, to ensure it vanishes exactly outside a certain admissible region. In fact, ``cut off the tails'' is a shorthand for a somewhat more involved procedure that will also change the shape of the profile function (away from a Gaussian shape) in the region where it does not vanish, as I will explain in Section~\ref{sec:simple}. To make the model work, these several difficulties must jointly be dealt with.

Like the original GRW model, the present model has two parameters, the width $\sigma$ of the collapse and the collapse rate $\lambda$ per particle (or, equivalently, the expected waiting time $\tau=1/\lambda$ for a collapse for a given particle). For our purposes, the waiting time is the relativistic timelike distance between two flashes associated with the same particle. We assume here the values suggested by GRW \cite{GRW86}, $\sigma \approx 10^{-7}$~m and $\tau \approx 10^{16}$ s. The empirical predictions of the model are presumably, like those of the original GRW model, too close to those of standard quantum mechanics to allow for an experimental test with present technology; a careful study of its empirical predictions, its deviations from the original GRW model, and possible experimental tests would be of interest.

While the considerations of this paper also work in curved space-time, they will be formulated for Minkowski space-time $\MMM$.

Let me mention other proposals of relativistic collapse theories: Early attempts at a relativistic version of continuous spontaneous localization (CSL) \cite{Pe90,Pe99} are divergent and lead to infinite energy increase; see also \cite{BP19}. A regularized relativistic version was developed by Bedingham and Pearle \cite{Bed10,Bed11,Pe15}. 
A model due to Dowker and Henson \cite{Fay02} lives on a discrete space-time and is relativistic in the appropriate lattice sense. A relativistic model due to Tilloy \cite{Til} is based on starting from a standard quantum field theory in a suitable regime, tracing out certain degrees of freedom, obtaining a master equation for the remaining ones, and finally using an unraveling of that master equation that should be empirically equivalent to the original quantum field theory in the regime considered.

The remainder of this paper is organized as follows. In Section~\ref{sec:background}, I review the non-interacting model. In Section~\ref{sec:assumptions}, I describe the assumptions made on the unitary part of the time evolution. In Section~\ref{sec:model}, I define the interacting model. In Section~\ref{sec:properties}, I discuss some of its properties. 

\section{Review of the Non-Interacting Version}
\label{sec:background}

We begin with a brief summary of the 2004 model for $N$ distinguishable non-interacting particles. 

I will specify the joint distribution of the first $n_i$ flashes for each particle number $i\in\{1,\ldots,N\}$. Let $X_{ik}$ with $i\in\{1,\ldots,N\}$, $k\in \{1,\ldots,n_i\}$ be the random space-time points at which the flashes occur. Let $\uX$ denote the collection of all $X_{ik}$ with $1\leq i\leq N$ and $1\leq k\leq n_i$, likewise $\ux$ the collection of the space-time points $x_{ik}$, and
\be\label{dux}
d\ux= \prod_{i=1}^N \prod_{k=1}^{n_i} d^4x_{ik}
\ee
the volume element in $4\nu$ dimensions (meaning either an infinitesimal set or its volume) with $\nu:=n_1+\ldots+n_N$.
For each $i$, let $x_{i0}$ be a given ``seed'' flash. The distribution of $\uX$ is of the form
\be\label{PPPdef1}
\PPP\bigl(\uX \in d\ux \bigr) = \scp{\psi_0}{D(\ux)|\psi_0} \, d\ux
\ee
with operators $D$ to be specified below and $\psi_0$ a wave function on a surface $\Sigma_0$ playing the role of an initial surface. In particular, the distribution of $\uX$ is associated with a POVM $G(d\ux) = D(\ux) \, d\ux$ with density $D(\ux)$. Let $\Hilbert_{1\Sigma}$ be the 1-particle Hilbert space associated with the spacelike surface $\Sigma$, and $\Hilbert_{10}:=\Hilbert_{1\Sigma_0}$, so $\psi_0\in \Hilbert_0 :=\Hilbert_{10}^{\otimes N}$. For each $i\in\{1,\ldots,N\}$, let $U^{\Sigma'}_{i\Sigma}$ be the unitary time evolution of particle $i$ from the spacelike surface $\Sigma$ to the spacelike surface $\Sigma'$. For all particles together, the unitary time evolution is $U^{\Sigma'}_{\Sigma}= U^{\Sigma'}_{1\Sigma} \otimes \cdots \otimes U^{\Sigma'}_{N\Sigma}$. We write $U_0^{\Sigma'}$ for $U_{\Sigma_0}^{\Sigma'}$.

\begin{figure}[h]
\begin{center}
\begin{tikzpicture}
\draw (-2.5,-0.5) -- (2.5,-0.5);
\draw (-2,-1) -- (-2,2.8);
\node at (2,-0.75) {space};
\node at (-1.5,2.6) {time};
\draw [very thick, domain=-2.5:2.5, samples=50] plot (\x, {sqrt(1+\x*\x))} );
\node at (1.3,2.3) {$\HHH_y(s)$};
\draw (0,1) -- (-1.5,1);
\draw (0,0) -- (-1.5,0);
\draw[<->] (-1.4,0) -- (-1.4,1);
\node at (-1.55,0.5) {$s$}; 
\draw[dashed] (0,0) -- (2.5,2.5);
\draw[dashed] (0,0) -- (-2.5,2.5);
\filldraw (0,0) circle [radius=0.07];
\node at (0.2,-0.2) {$y$};
\end{tikzpicture}
\end{center}
\caption{In Minkowski space-time, the surface of constant timelike distance $s$ from $y$ in the future of $y$, $\HHH_y(s)$, has the shape of a hyperboloid that is asymptotic to the future light cone of $y$ (dashed).}
\label{fig:H}
\end{figure}
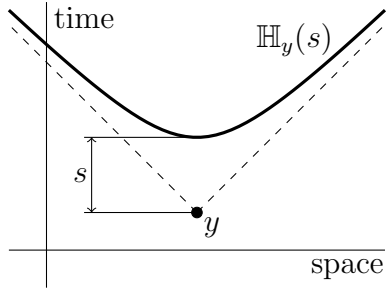

Let $\HHH_y(s)$ be the surface of constant timelike distance $s$ from $y\in\MMM$ in the future of $y$ (henceforth called a hyperboloid, see Figure~\ref{fig:H}); we also write $|\cdot|$ for the invariant (proper) length of a timelike 4-vector, $\HHH_y(x):=\HHH_y(|x-y|)$ for the hyperboloid containing $x\in \future(y)$, and\footnote{For definiteness, we take $\future(y)$ to be a closed set (i.e., the ``causal future''), including the future light cone and $y$ itself. Likewise for the past.}
\be\label{HHHikdef}
\HHH_{ik} := \HHH_{x_{ik-1}}(x_{ik})\,.
\ee
Let $\tilde g_{yx}$ be the Gaussian function centered at $x$ along the hyperboloid $\HHH_y(x)$,
\be\label{tildegdef}
\tilde{g}_{yx}(z) := \exp\Biggl(-\frac{\sdist_{\HHH_y(x)}(x,z)^2}{4\sigma^2} \Biggr)\,,
\ee
where $\sdist_\Sigma$ means the spacelike distance along $\Sigma$,
and $g_{yx}$ the version normalized in $x$,
\be\label{gdef}
g_{yx}(z) := \frac{1}{\|\tilde{g}_{yz}\|} \tilde{g}_{yx}(z)\,,
\ee
where, for a spacelike surface $\Sigma$ and $f:\Sigma\to \CCC$,
\be
\|f\|:= \Bigl( \int_{\Sigma} d^3x \, |f(x)|^2 \Bigr)^{1/2}
\ee
is the $L^2$ norm and $d^3x$ means the invariant volume of a 3-surface element (defined by the 3-metric on $\Sigma$).
For the multiplication operator by $g_{yx}$ on $\Sigma=\HHH_y(x)$, we write $P(g_{yx})$. To the flash $x_{ik}$ we associate the collapse operator
\be\label{Kdef1}
K(x_{ik}):= U_{i\HHH_{ik}}^0 \, P(g_{x_{ik-1}x_{ik}})\, U_{i0}^{\HHH_{ik}}\,,
\ee
and to all flashes together the operator
\be\label{Ldef1}
L(\ux):= \bigotimes_{i=1}^N \prod_{k=1}^{n_i} K(x_{ik})
\ee
with the order in the product so that $k$ increases from right to left. Then set
\be\label{Ddef1}
D(\ux) := \biggl( \frac{1}{\tau^\nu} \prod_{i=1}^N \prod_{k=1}^{n_i} 1_{x_{ik}\in \future(x_{ik-1})} e^{-|x_{ik}-x_{ik-1}|/\tau} \biggr)  L(\ux)^\dagger \, L(\ux)\,.
\ee
It was shown in \cite{Tum06} (and it follows from the proofs below that apply to the more general interacting case) that
\be
\int_{\MMM^\nu} \hspace{-3mm} d\ux \, D(\ux) = I
\ee
with $I$ the unit operator; as a consequence, $\PPP$ is a probability distribution for every $\psi_0\in \Hilbert_0$ with $\|\psi_0\|=1$.

Equivalently, $G(d\ux) = \otimes_{i=1}^N G_i(d^4x_{i1}\times \cdots \times d^4x_{in_i})$ and $D(\ux) = \otimes_{i=1}^N D_i(x_{i1},\ldots,x_{in_i})$ with $G_i(d^4x_{i1}\times \cdots\times d^4x_{in_i})= D_i(x_{i1},\ldots,x_{in_i}) \, d^4x_{i1}\cdots d^4x_{in_i}$ and 
\begin{multline}
D_i(x_{i1},\ldots,x_{in_i})= \biggl(\frac{1}{\tau^{n_i}}\prod_{k=1}^{n_i} 1_{x_{ik}\in \future(x_{ik-1})} e^{-|x_{ik}-x_{ik-1}|/\tau}\biggr)\,\times\\
K(x_{i1})^\dagger \cdots K(x_{in_i})^\dagger K(x_{in_i}) \cdots K(x_{i1})\,.
\end{multline}
If $\psi_0$ factorizes into a tensor product $\otimes_{i=1}^N \psi_{0i}$, then $\PPP$ will factorize, and the flashes for different $i$ will be independent of each other. But in general, even though $G$ factorizes, $\PPP$ will not factorize, which leads to non-local correlations between the flashes associated with different $i$.

We now prepare for defining the novel interacting version of the model.

\section{Assumptions}
\label{sec:assumptions}

In this article, I put no emphasis on mathematical rigor. But the reasoning is actually rigorous if the assumption is satisfied that the unitary evolution is defined not only between Cauchy surfaces but also to hyperboloids or surfaces consisting of pieces of hyperboloids. For example, a sufficient class of surfaces would be the set $\sS$ of those sets that are intersected exactly once by every timelike straight line; in the following, I will simply say ``spacelike surface'' for any $\Sigma\in\sS$. For massive free Dirac particles, it is known \cite{DP03} (see also \cite{Tum09}) that the unitary evolution is also defined from a Cauchy surface to a hyperboloid, so it seems plausible that it is also defined between any two surfaces belonging to $\sS$.

So, we assume that the unitary part of the time evolution is given by a Tomonaga--Schwinger type evolution, more precisely, by a \emph{unitary hypersurface evolution} \cite{LT19} between spacelike surfaces. That is, we assume that with every spacelike surface $\Sigma\in\sS$ there is associated a Hilbert space $\Hilbert_\Sigma$ (see \cite{LT19} for examples), and that for any two spacelike surfaces $\Sigma,\Sigma'$ we are given a unitary isomorphism $U^{\Sigma'}_\Sigma:\Hilbert_\Sigma \to \Hilbert_{\Sigma'}$ representing the time evolution without collapses, such that 
\be
U^\Sigma_\Sigma = I, ~~~U^{\Sigma''}_{\Sigma'} \, U^{\Sigma'}_{\Sigma} = U^{\Sigma''}_{\Sigma}
\ee
for all $\Sigma,\Sigma'$, and $\Sigma''$. Moreover, for each $\Sigma$ we are given a position PVM $P_\Sigma$ (``configuration observable'') on $\Sigma^N$ acting on $\Hilbert_\Sigma$. This completes the definition of ``unitary hypersurface evolution.''
For a function $f:\Sigma^N\to\RRR$, we define the associated multiplication operator
\be
P(f) := P_\Sigma(f) := \int_{\Sigma^N} P_\Sigma(d^3x_1\times \cdots \times d^3x_N) \, f(x_1,\ldots,x_N)\,. 
\ee
Of the unitary evolution we assume 

\bigskip

\noindent {\bf Interaction locality} (IL) \cite{LT19}: {\it For any two spacelike hypersurfaces $\Sigma,\Sigma'$, any set $A\subseteq \Sigma\cap \Sigma'$ in the overlap, and any $i\in\{1,\ldots,N\}$,}
\be\label{IL}
P_{\Sigma'}\Bigl((\Sigma')^{i-1}\times A \times (\Sigma')^{N-i-1}\Bigr)
= U^{\Sigma'}_{\Sigma} \, P_\Sigma\Bigl( \Sigma^{i-1}\times A \times \Sigma^{N-i-1} \Bigr) \, U^\Sigma_{\Sigma'}\,.
\ee

\bigskip

The condition expresses that the unitary evolution includes no interaction term between spacelike separated regions. Specifically, the unitary evolution from $\Sigma$ to $\Sigma'$ acts like the identity on $\Sigma \cap \Sigma'$. Here are some consequences of IL:
\begin{enumerate}
\item Fix a function $f$ on $\Sigma\cap \Sigma'$ and a label $i\in\{1,\ldots,N\}$, and let $P(f)$ be the associated multiplication operator in the $i$-th variable; more precisely, let
\begin{subequations}
\begin{align}
P_\Sigma(f) &:= \int\limits_{x\in \Sigma\cap \Sigma'} \hspace{-2mm} P_\Sigma\Bigl(\Sigma^{i-1} \times d^3x \times \Sigma^{N-i-1}\Bigr) \, f(x)\,,\\ 
P_{\Sigma'}(f) &:= \int\limits_{x\in \Sigma\cap \Sigma'} \hspace{-2mm} P_{\Sigma'}\Bigl(\Sigma^{\prime (i-1)} \times d^3x \times \Sigma^{\prime (N-i-1)}\Bigr) \, f(x)\,.
\end{align}
\end{subequations}
Then
\be\label{ILf}
P_{\Sigma'}(f) = U^{\Sigma'}_{\Sigma} \, P_\Sigma(f) \, U^\Sigma_{\Sigma'}\,.
\ee
That is because, setting $A=d^3x$ in \eqref{IL},
\begin{subequations}
\begin{align}
U^{\Sigma'}_{\Sigma} \, P_\Sigma(f) \, U^\Sigma_{\Sigma'}&=\int_{\Sigma\cap \Sigma'} U^{\Sigma'}_{\Sigma} \, P_\Sigma(\Sigma^{i-1} \times d^3x \times \Sigma^{N-i-1})\, U^\Sigma_{\Sigma'} \, f(x)\\
&=\int_{\Sigma\cap \Sigma'} P_{\Sigma'}(\Sigma^{\prime (i-1)} \times d^3x \times \Sigma^{\prime (N-i-1)}) \, f(x)\\[2mm]
&= P_{\Sigma'}(f)\,.
\end{align}
\end{subequations}
\item Let $A:= \Sigma\cap \Sigma'$, $B:=\Sigma\setminus A$, $B':= \Sigma' \setminus A$. Then 
\be\label{ILB}
P_{\Sigma'}\Bigl((\Sigma')^{i-1}\times B' \times (\Sigma')^{N-i-1}\Bigr)
= U^{\Sigma'}_{\Sigma} \, P_\Sigma\Bigl( \Sigma^{i-1}\times B \times \Sigma^{N-i-1} \Bigr) \, U^\Sigma_{\Sigma'}\,.
\ee
That is because, using the normalization $P_{\Sigma'}(\Sigma^{\prime N})=I$,
\begin{subequations}
\begin{align}
&P_{\Sigma'}\Bigl((\Sigma')^{i-1}\times B' \times (\Sigma')^{N-i-1}\Bigr)\\
&=I-P_{\Sigma'}\Bigl((\Sigma')^{i-1}\times A \times (\Sigma')^{N-i-1}\Bigr)\\
&\stackrel{\eqref{IL}}{=} I-U^{\Sigma'}_{\Sigma} \, P_\Sigma\Bigl( \Sigma^{i-1}\times A \times \Sigma^{N-i-1} \Bigr) \, U^\Sigma_{\Sigma'}\\
&= U^{\Sigma'}_{\Sigma} \, P_\Sigma\Bigl( \Sigma^{i-1}\times B \times \Sigma^{N-i-1} \Bigr) \, U^\Sigma_{\Sigma'}\,.
\end{align}
\end{subequations}
\end{enumerate}

\section{Interacting Model}
\label{sec:model}

\subsection{A Simple Case}
\label{sec:simple}

As a warm-up we consider the simple case of $N=2$ particles and limit our attention to the first flash for each particle. That is, we define the joint distribution of two flashes, $X_1$ and $X_2$, each associated with a different particle. We take as given a seed flash for each particle, $y_1$ and $y_2$, and an initial wave function $\psi_0$. We postulate that the joint distribution is of the form
\be\label{PPPdef2}
\PPP\bigl(X_1\in d^4x_1, X_2\in d^4x_2\bigr) = \scp{\psi_0}{D(x_1,x_2)|\psi_0} \, d^4x_1 \, d^4x_2
\ee
with positive operator-valued density
\be\label{Ddef2}
D(x_1,x_2) =1_{x_1 \in \future(y_1)} 1_{x_2\in \future(y_2)} \frac{1}{\tau^2} e^{-|x_1-y_1|/\tau}e^{-|x_2-y_2|/\tau} \, L(x_1,x_2)^\dagger \, L(x_1,x_2)\,,
\ee
where $L$ will be defined below. As before, the distribution of $(X_1,X_2)$ is determined by a POVM $G(dx_1\times dx_2) = D(x_1,x_2) \, dx_1 \, dx_2$ with density $D(x_1,x_2)$. We will consider two relevant hyperboloids,
\be
\HHH_i:= \HHH_{y_i}(x_i)~~~\text{with }i=1,2,
\ee
and a profile function $g_{yx}(z)$ on $\HHH_y(x)$ that has a bump shape around $x$; the first thought would be to use the Gaussian function $\tilde g_{yx}$ centered at $x$ given by \eqref{tildegdef}, but we will refine this choice later. We write $g_{yxi}$ for the function $g_{yx}$ applied to the $i$-th variable; that is, $g_{yx}$ is a function on a 3-surface $\Sigma$, and $g_{yxi}$ a function on $\Sigma^N$ (here with $N=2$). 

A basic difference between the interacting and the non-interacting case is that in the non-interacting case, one can evolve particle 1 to $\Sigma_1$ and independently particle 2 to another surface $\Sigma_2$; in the interacting case, we can only evolve all particles \emph{jointly} to a certain surface.
Let us consider two candidates for $L(x_1,x_2)$,
\begin{subequations}
\begin{align}
L^{(21)}(x_1,x_2) &:=  U^0_{\HHH_2} P_{\HHH_2}(g_{y_2x_22}) U^{\HHH_2}_{\HHH_1} P_{\HHH_1}(g_{y_1x_11}) \, U^{\HHH_1}_0\,,\label{L21}\\
L^{(12)}(x_1,x_2) &:= U^0_{\HHH_1} P_{\HHH_1}(g_{y_1x_11}) U^{\HHH_1}_{\HHH_2} P_{\HHH_2}(g_{y_2x_22}) \, U^{\HHH_2}_0\,.\label{L12}
\end{align}
\end{subequations}
We can think of each of \eqref{L21} and \eqref{L12} as a product of two multiplication operators, each Heisenberg-evolved to the initial surface $\Sigma_0$. Since the unitary evolution does not commute with multiplication operators, the two multiplication operators on different surfaces do not commute with each other, so \eqref{L21} and \eqref{L12} are in general not equal, with two relevant exceptions: First, in the absence of interaction, the unitary evolution factorizes, and the two multiplication operators commute because they act on different factors. Second, if the supports of $g_{y_1x_1}$ and $g_{y_2x_2}$ are spacelike separated (i.e., if every point in the one set is spacelike from every point in the other), then they commute by virtue of IL. (It may appear pointless to talk about the support of $g_{yx}$ if $g_{yx}$ is a Gaussian because then its support is the entire surface $\HHH_y(x)$; but we will later cut off the Gaussian tails to create smaller supports.)

So, for the purpose of defining the operator $L(x_1,x_2)$, we are confronted with a problem of operator ordering. Roughly speaking, we choose the ordering according to the temporal ordering of $x_1$ and $x_2$: For $x_1$ in the past of $x_2$, we choose $L=L^{(21)}$ and vice versa. But the exact definition is a little more complicated, partly because we need to consider the support of $g_{y_1x_1}$, not just the point $x_1$.

To this end, we subdivide each $\HHH_i$ into two parts (see Figure~\ref{fig:PFH}),
\be\label{PFHdef}
F_i:= \HHH_i \cap \future(\HHH_{3-i})\,,~~~~P_i:=\HHH_i\cap \past(\HHH_{3-i})\,.
\ee
(Note that the interface $\HHH_1\cap\HHH_2$ has measure 0 in $\HHH_1$ as well as in $\HHH_2$, except if $y_1=y_2$ and $\tau_1=\tau_2$, which happens with probability 0. Ignoring sets of measure 0, we can pretend that $F_i$ and $P_i$ form a partition of $\HHH_i$.)

\begin{figure}[h]
\begin{center}
\begin{tikzpicture}
\draw (-2.5,-0.5) -- (2.5,-0.5);
\draw (-2,-1) -- (-2,2.8);
\draw [very thick, domain=-2.5:2.5, samples=50] plot (\x, {-0.5+sqrt(1+(\x+0.8)*(\x+0.8)))} );
\node at (-0.7,0.15) {$\HHH_1$};
\draw [very thick, domain=-2.5:2.5, samples=50] plot (\x, {-0.3+sqrt(0.1+(\x-0.5)*(\x-0.5))} );
\node at (1,-0.2) {$\HHH_2$};
\node at (-1.65,0.45) {$P_1$};
\node at (1.2,2.2) {$F_1$};
\node at (2.1,0.8) {$P_2$};
\node at (-1.5,2.2) {$F_2$};
\end{tikzpicture}
\end{center}
\caption{Two hyperboloids $\HHH_1,\HHH_2$ are each subdivided according to \eqref{PFHdef} into two 3-regions $F_i$ and $P_i$, above and below the other.}
\label{fig:PFH}
\end{figure}
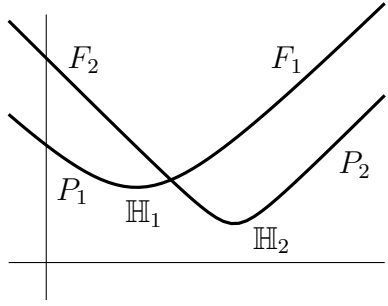

For any set $A\subseteq \HHH_y(x)$, set
\be
\|f\|_A = \Bigl(\int_A d^3z \, |f(z)|^2 \Bigr)^{1/2}
\ee
and
\be\label{gAdef}
g_{yAx}(z):= \frac{1}{\|\tilde g_{yz}\|_A} 1_{z\in A} \,1_{x\in A} \, \tilde g_{yx}(z)\,.
\ee
Some functions of this type are depicted in Figure~\ref{fig:gAplot}.

\begin{figure}[h]
\begin{center}
\includegraphics[width=7cm]{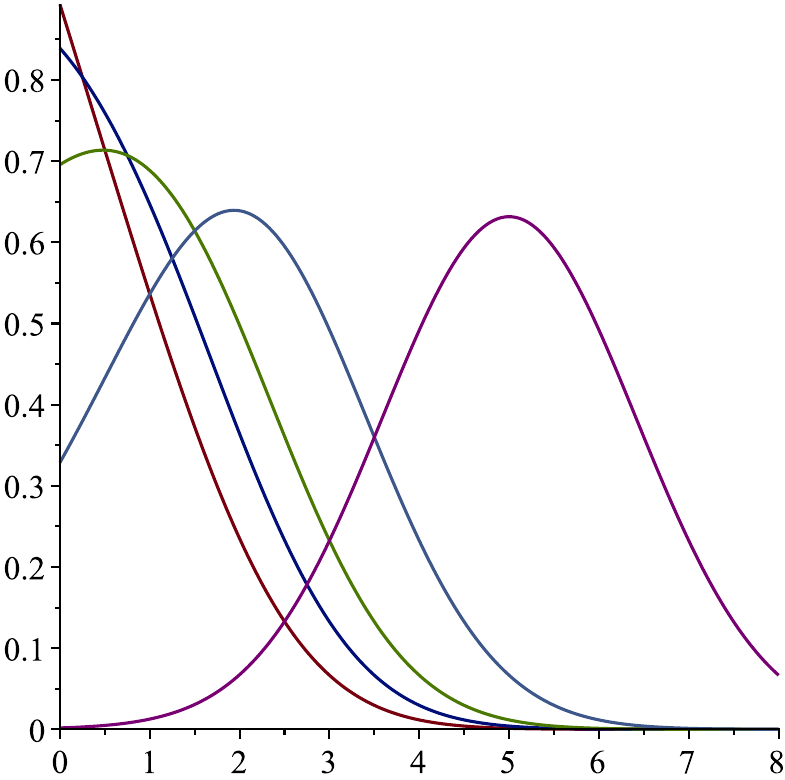}
\end{center}
\caption{Examples of functions of the type \eqref{gAdef}, but defined on the real line instead of a hyperboloid, here with $A=[0,\infty)$. The first factor on the right-hand side of \eqref{gAdef} causes a deviation from the Gaussian shape which is small for $x$ far from the boundary of $A$ but visible for $x$ close to it. The right axis shown is $z$, the functions plotted are $g_{Ax}(z)= \|g_z \|_A^{-1} \, \tilde g_x(z)$ with $\tilde g_x$ the Gaussian density with center $x$ and width 1 for the values $x=0,\tfrac{1}{2},1,2,5$ (in the order of centers from left to right, or of decreasing values of $g_{Ax}(0)$).}
\label{fig:gAplot}
\end{figure}

Since
\be
\tilde g_{yx}(z) = \tilde g_{yz}(x)\,,
\ee
it follows that for every $z\in \HHH_y(x)$,
\begin{subequations}
\begin{align}
\int_A d^3x \, g_{yAx}(z)^2
&=  1_{z\in A} \int_A d^3x \,  \frac{1}{\|\tilde g_{yz}\|_A^2}\tilde g_{yx}(z)^2 \\
&=   \frac{1_{z\in A}}{\|\tilde g_{yz}\|_A^2} \int_A d^3x \, \tilde g_{yx}(z)^2 \\
&=   \frac{1_{z\in A}}{\|\tilde g_{yz}\|_A^2} \int_A d^3x \, \tilde g_{yz}(x)^2 \\
&=   \frac{1_{z\in A}}{\|\tilde g_{yz}\|_A^2} \, \| \tilde g_{yz}\|^2 \\
&= 1_{z\in A}\,.\label{gyAxnormalized}
\end{align}
\end{subequations}
Again, we write $g_{yAxi}$ for the function $g_{yAx}$ applied to the $i$-th variable. It follows from \eqref{gyAxnormalized} that, for any $i\in\{1,\ldots,N\}$ and $A\subseteq \HHH_y(s)$,
\be\label{Gnormalized}
\int_A d^3x \, P_{\HHH_y(s)}(g_{yAxi})^2 = P_{\HHH_y(s)}\bigl(\HHH_y(s)^{i-1}\times A \times \HHH_y(s)^{N-i-1}\bigr)\,.
\ee
We define
\be\label{Ldef}
L(x_1,x_2) := 
\begin{cases}
U^0_{\HHH_2} \, P_{\HHH_2}(g_{y_2P_2x_22}) \, U^{\HHH_2}_{\HHH_1} \, P_{\HHH_1}(g_{y_1P_1x_11}) \, U^{\HHH_1}_0 & \text{if }x_1\in P_1, x_2\in P_2\\
U^0_{\HHH_2} \, P_{\HHH_2}(g_{y_2F_2x_22}) \, U^{\HHH_2}_{\HHH_1} \,P_{\HHH_1}(g_{y_1P_1x_11}) \, U^{\HHH_1}_0 & \text{if }x_1\in P_1, x_2\in F_2\\
U^0_{\HHH_1} \, P_{\HHH_1}(g_{y_1F_1x_11})  \, U^{\HHH_1}_{\HHH_2} \, P_{\HHH_2}(g_{y_2P_2x_22}) \, U^{\HHH_2}_0 &\text{if }x_1\in F_1,x_2\in P_2\\
U^0_{\HHH_1} \, P_{\HHH_1}(g_{y_1F_1x_11}) \, U^{\HHH_1}_{\HHH_2} \, P_{\HHH_2}(g_{y_2F_2x_22}) \, U^{\HHH_2}_0 &\text{if }x_1\in F_1, x_2 \in F_2.
\end{cases}
\ee

\begin{prop}\label{prop:IN=2} 
Interaction locality implies that
\be
\int d^4x_1 \int d^4x_2 \, D(x_1,x_2) = I\,.
\ee
As a consequence, $G(\cdot)$ is a POVM, and \eqref{PPPdef2} defines a probability distribution for every $\psi_0\in \Hilbert_0$ with $\|\psi_0\|=1$.
\end{prop}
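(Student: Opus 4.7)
The plan is to reduce the identity to a spatial integration on $\HHH_1\times\HHH_2$ and then collapse the four cases of \eqref{Ldef} by pulling every relevant projector back to the single piecewise spacelike surface $\Sigma_\vee := P_1\cup P_2$ (with $\Sigma_\wedge := F_1\cup F_2$ playing a symmetric role). I would parametrize $x_i\in\future(y_i)$ by the proper time $s_i:=|x_i-y_i|$ and a position on $\HHH_i=\HHH_{y_i}(s_i)$, giving $d^4x_i=ds_i\,d^3 x_i$; since $\int_0^\infty \tau^{-1}e^{-s_i/\tau}\,ds_i=1$, the proposition reduces to proving, for each pair $(s_1,s_2)$, that
\begin{equation*}
\int_{\HHH_1\times\HHH_2} d^3 x_1\,d^3 x_2\; L(x_1,x_2)^\dagger L(x_1,x_2)=I.
\end{equation*}
The essential geometric input is $\Sigma_\vee\cap\HHH_i=P_i$ and $\Sigma_\wedge\cap\HHH_i=F_i$ (up to measure zero), so IL governs the passage of projectors among $\HHH_i$, $\Sigma_\vee$ and $\Sigma_\wedge$.

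In Case 1, IL applied factorwise (both $P_i\subset\HHH_i\cap\Sigma_\vee$) rewrites $L^{(1)}=U^0_{\Sigma_\vee}\,P_{\Sigma_\vee}(g_{y_1 P_1 x_1 1})\,P_{\Sigma_\vee}(g_{y_2 P_2 x_2 2})\,U^{\Sigma_\vee}_0$; the two multiplications act on different particle slots of the PVM on $\Sigma_\vee$ and hence commute, and \eqref{Gnormalized} integrated over $P_1\times P_2$ yields $\mathcal I_1=U^0_{\Sigma_\vee}\,P_{\Sigma_\vee}(P_1\times P_2)\,U^{\Sigma_\vee}_0$. In the mixed Cases 2 and 3, the factor whose support lies in some $F_i$ appears squared between a cancelling pair $U^{\HHH_j}_{\HHH_i}(\cdot)U^{\HHH_i}_{\HHH_j}$ inside $L^\dagger L$, so I integrate that $\vec x_i$ first via \eqref{Gnormalized} to produce a projector of the form ``particle $i$ in $F_i$ on $\HHH_i$''. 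The decisive step is then the identity (shown for $i=1$; $i=2$ is analogous)
\begin{equation*}
U^{\Sigma_\vee}_{\HHH_1}\,P_{\HHH_1}(F_1\times\HHH_1)\,U^{\HHH_1}_{\Sigma_\vee}
\;=\;I-P_{\Sigma_\vee}(P_1\times\Sigma_\vee)
\;=\;P_{\Sigma_\vee}(P_2\times\Sigma_\vee),
\end{equation*}
obtained from the complementarity $\mathbf{1}_{F_1}=\mathbf{1}_{\HHH_1}-\mathbf{1}_{P_1}$ together with IL on the complementary piece $P_1\subset\HHH_1\cap\Sigma_\vee$, followed by the PVM normalization on $\Sigma_\vee=P_1\cup P_2$. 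The right-hand side is a genuine multiplication operator on $\Sigma_\vee$ that commutes with the surviving particle-$j$ profile (already pulled to $\Sigma_\vee$ by IL), so a second application of \eqref{Gnormalized} gives $\mathcal I_2=U^0_{\Sigma_\vee}\,P_{\Sigma_\vee}(P_1\times P_1)\,U^{\Sigma_\vee}_0$ and $\mathcal I_3=U^0_{\Sigma_\vee}\,P_{\Sigma_\vee}(P_2\times P_2)\,U^{\Sigma_\vee}_0$.

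Case 4 lives naturally on $\Sigma_\wedge$ and reproduces the Case 1 argument with $F_i$ in place of $P_i$, yielding $\mathcal I_4=U^0_{\Sigma_\wedge}\,P_{\Sigma_\wedge}(F_1\times F_2)\,U^{\Sigma_\wedge}_0$; applying the same complementarity-IL rewrite factor by factor pulls this back to $\mathcal I_4=U^0_{\Sigma_\vee}\,P_{\Sigma_\vee}(P_2\times P_1)\,U^{\Sigma_\vee}_0$. The four regions $P_1\times P_2$, $P_1\times P_1$, $P_2\times P_2$, $P_2\times P_1$ assemble into $(P_1\cup P_2)\times(P_1\cup P_2)=\Sigma_\vee\times\Sigma_\vee$, and the PVM closure yields
\begin{equation*}
\sum_{k=1}^4 \mathcal I_k=U^0_{\Sigma_\vee}\,P_{\Sigma_\vee}(\Sigma_\vee\times\Sigma_\vee)\,U^{\Sigma_\vee}_0=U^0_{\Sigma_\vee}\,I\,U^{\Sigma_\vee}_0=I,
\end{equation*}
which is the identity to be proved. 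The hard step is precisely the complementarity-IL rewrite: IL is stated only for sets $A\subseteq\Sigma\cap\Sigma'$, whereas each $F_i$ lies on the side of $\HHH_i$ that is \emph{not} in the overlap with $\Sigma_\vee$, and the supports of the two profile factors in the mixed cases do not jointly fit on any single spacelike surface. Replacing $\mathbf{1}_{F_i}$ by its complement inside $\HHH_i$, which \emph{is} in the overlap, is what converts each cross term into a commuting pair of multiplications on $\Sigma_\vee$ and allows the PVM identity to close the argument.
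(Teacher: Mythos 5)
Your proof is correct and follows essentially the same route as the paper's: reduce via the coarea formula to the hyperboloid integral, split into the four $P_i/F_i$ regions, pull all projectors to the surface $\Sigma=P_1\cup P_2$ using interaction locality (with the complementarity identity \eqref{ILB} handling the $F_i$ pieces that do not lie in the overlap), integrate via \eqref{Gnormalized}, and assemble $P_\Sigma(P_1\times P_2)+P_\Sigma(P_1\times P_1)+P_\Sigma(P_2\times P_2)+P_\Sigma(P_2\times P_1)=P_\Sigma(\Sigma^2)=I$. The only cosmetic difference is that for the fourth case you route explicitly through $\Sigma_\wedge=F_1\cup F_2$ and then pull back to $\Sigma_\vee$, where the paper simply says ``likewise''; this is equivalent.
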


\begin{proof}
Since (coarea formula)
\be\label{coarea}
\int\limits_{\future(y)} \hspace{-3mm} d^4x\, f(x,y) = \int\limits_0^\infty ds \int\limits_{\HHH_y(s)} \hspace{-1mm} d^3x \, f(x,y)\,,
\ee 
and since
\be\label{expint}
\int_0^\infty \hspace{-3mm} ds \, \tau^{-1} \exp(-s/\tau)=1\,, 
\ee
it suffices to show that for any two hyperboloids $\HHH_1,\HHH_2$ based at $y_1$ and $y_2$, respectively,
\be\label{toshow}
\int_{\HHH_1}d^3x_1 \int_{\HHH_2} d^3x_2 \, L(x_1,x_2)^\dagger \, L(x_1,x_2) = I\,.
\ee
Let $\Sigma:= P_1\cup P_2$. Writing $\HHH_i = P_i \cup F_i$ yields four parts for $\HHH_1\times \HHH_2$. We deal with each part separately, beginning with $P_1\times P_2$: By the consequence \eqref{ILB} of IL, 
\be\label{move1}
U_{\HHH_2}^{\Sigma} \, P_{\HHH_2}(\HHH_2 \times P_2) \, U_{\Sigma}^{\HHH_2} =  P_{\Sigma}(\Sigma \times P_2)\,.
\ee
By the consequence \eqref{ILf} of IL,
\be\label{move2}
U_{\HHH_1}^{\Sigma} \, P_{\HHH_1}(g_{y_1P_1x_11}) \, U_{\Sigma}^{\HHH_1} = P_{\Sigma}(g_{y_1P_1x_11})
\ee
for every $x_1\in P_1$. Thus,
\begin{subequations}\label{simplify1}
\begin{align}
&\int_{P_1}d^3x_1 \int_{P_2} d^3x_2 \, L(x_1,x_2)^\dagger \, L(x_1,x_2)\\
=~~& \int_{P_1}d^3x_1 \int_{P_2} d^3x_2 \, U^0_{\HHH_1} \, P_{\HHH_1}(g_{y_1P_1x_11}) \, U^{\HHH_1}_{\HHH_2} \, P_{\HHH_2}(g_{y_2P_2x_22})^2 \, U^{\HHH_2}_{\HHH_1} \, P_{\HHH_1}(g_{y_1P_1x_11}) \, U^{\HHH_1}_0\\
\stackrel{\eqref{Gnormalized}}{=}~& \int_{P_1}d^3x_1  \, U^0_{\HHH_1} \, P_{\HHH_1}(g_{y_1P_1x_11}) \, U^{\HHH_1}_{\HHH_2} \, P_{\HHH_2}(\HHH_2 \times P_2) \, U^{\HHH_2}_{\HHH_1} \, P_{\HHH_1}(g_{y_1P_1x_11}) \, U^{\HHH_1}_0\\
\stackrel{\eqref{move1},\eqref{move2}}{=}& \int_{P_1}d^3x_1  \, U^0_{\Sigma} \, P_\Sigma(g_{y_1P_1x_11}) \, P_{\Sigma}(\Sigma \times P_2) \,  P_\Sigma(g_{y_1P_1x_11}) \, U^{\Sigma}_0\\
=~~& \int_{P_1}d^3x_1  \, U^0_{\Sigma} \, P_{\Sigma}(g_{y_1P_1x_11})^2 \,  P_{\Sigma}(\Sigma \times P_2) \, U^{\Sigma}_0\label{simplify1commute}\\
\stackrel{\eqref{Gnormalized}}{=}~& U^0_{\Sigma} \, P_{\Sigma}(P_1\times \Sigma) \,  P_{\Sigma}(\Sigma\times P_2) \, U^{\Sigma}_0\\[3mm]
=~~& U^0_{\Sigma} \, P_{\Sigma}(P_1 \times P_2) \, U^{\Sigma}_0 \,.\label{P1P2}
\end{align}
\end{subequations}
Here, we used in \eqref{simplify1commute} that the operators of a PVM commute, and in the last step that, for every PVM, $P(A)P(B)=P(A\cap B)$.

We now turn to the contribution from $P_1\times F_2$. Here we exploit that, by the consequence \eqref{ILB} of IL applied to $\HHH_2$ and $\Sigma=P_1\cup P_2$,
\be\label{move3}
P_{\Sigma}(\Sigma \times P_1) = U^{\Sigma}_{\HHH_2} \, P_{\HHH_2}(\HHH_2 \times F_2) \, U^{\HHH_2}_{\Sigma} \,.
\ee
With the same strategy as in \eqref{simplify1}, we now obtain that
\be\label{P1F2}
\int_{P_1}d^3x_1 \int_{F_2} d^3x_2 \, L(x_1,x_2)^\dagger \, L(x_1,x_2)
= U^0_{\Sigma} \, P_{\Sigma}(P_1 \times P_1) \, U^{\Sigma}_0\,.
\ee
For $F_1\times P_2$, we interchange the order of integration so that the $x_1$ integration is carried out first (i.e., inside the $x_2$ integral). Exploiting that, by \eqref{ILB},
\be\label{move4}
P_{\Sigma}(P_2\times \Sigma) = U_{\HHH_1}^{\Sigma} \, P_{\HHH_1}(F_1\times \HHH_1) \, U_{\Sigma}^{\HHH_1}  \,,
\ee
we obtain through the same strategy as before that
\be\label{F1P2}
\int_{P_2} d^3x_2 \int_{F_1}d^3x_1 \, L(x_1,x_2)^\dagger \, L(x_1,x_2)
= U^0_{\Sigma} \, P_{\Sigma}(P_2 \times P_2) \, U^{\Sigma}_0\,.
\ee
Likewise for $F_1\times F_2$:
\be\label{F1F2}
\int_{F_1}d^3x_1 \int_{F_2} d^3x_2 \, L(x_1,x_2)^\dagger \, L(x_1,x_2)
= U^0_{\Sigma} \, P_{\Sigma}(P_2 \times P_1) \, U^{\Sigma}_0\,.
\ee
Putting together \eqref{P1P2}, \eqref{P1F2}, \eqref{F1P2}, and \eqref{F1F2}, we obtain that
\begin{subequations}
\begin{align}
&\int_{\HHH_1}d^3x_1 \int_{\HHH_2} d^3x_2 \, L(x_1,x_2)^\dagger \, L(x_1,x_2) \\
&= U^0_{\Sigma} \, P_{\Sigma}\Bigl( (P_1 \times P_2)\cup (P_1\times P_1) \cup (P_2 \times P_2) \cup (P_2\times P_1) \Bigr) \, U^{\Sigma}_0\\[2mm]
&= U^0_{\Sigma} \, P_{\Sigma}(\Sigma^2) \, U^{\Sigma}_0
= U^0_{\Sigma} \, I \, U^{\Sigma}_0=I\,,
\end{align}
\end{subequations}
as claimed in \eqref{toshow}.
\end{proof}

\subsection{General Case}

Consider $n_i$ flashes for particle $i$; they occur at the random points $X_{ik}$, $i\in\{1,\ldots,N\}$, $k\in \{1,\ldots,n_i\}$. Let $\uX$ denote again the collection of all $X_{ik}$ with $1\leq i\leq N$ and $1\leq k\leq n_i$, likewise $\ux$ the collection of the space-time points $x_{ik}$, and $d\ux$ as in \eqref{dux}. For each $i$, let $x_{i0}$ be a given seed flash. The distribution of $\uX$ is again of the form
\be\label{PPPdef}
\PPP\bigl(\uX \in d\ux \bigr) = \scp{\psi_0}{D(\ux)|\psi_0} \, d\ux
\ee
with $D$ again of the form
\be\label{Ddef}
D(\ux) = \biggl( \frac{1}{\tau^\nu} \prod_{i=1}^N \prod_{k=1}^{n_i} 1_{x_{ik}\in \future(x_{ik-1})} e^{-|x_{ik}-x_{ik-1}|/\tau} \biggr)  L(\ux)^\dagger \, L(\ux)\,.
\ee
In particular, the distribution of $\uX$ is again determined by a POVM $G(d\ux) = D(\ux) \, d\ux$ with density $D(\ux)$. We use the notation $\HHH_{ik}$ as in \eqref{HHHikdef}. The first, rough idea would be to take $L$ to be something like
\be
\text{``}~~L(\ux) = \prod_{i=1}^N \prod_{k=1}^{n_i} U^0_{\HHH_{ik}} \, P_{\HHH_{ik}}(g_{x_{ik-1}x_{ik}i}) \, U^{\HHH_{ik}}_0~~\text{''}
\ee
with a problem of operator ordering. To address this problem, we need to construct the analogs of the 3-cells $P_i$ and $F_i$ of the previous section.

\subsubsection{Division into Cells}

The connected components of $\MMM\setminus\cup_{ik}\HHH_{ik}$ (more precisely, their closures) we call \emph{4-cells}. They can be labeled by $\uk=(k_1,\ldots,k_N) \in \prod_{i=1}^N \{0,\ldots,n_i\}$: the 4-cell for $\uk$ is defined as
\be\label{4Cdef}
{}^{4}C_{\uk} := \bigcap_{i=1}^N \Bigl(\future(\HHH_{ik_i}) \cap \past(\HHH_{ik_i+1}) \Bigr)\,,
\ee
where $\future(\HHH_{i0})$ and $\past(\HHH_{in_i+1})$ should be understood as $\MMM$; see Figure~\ref{fig:4cellsN=2} for an example. There are $\prod_i (n_i+1)$ 4-cells. The 4-cells form a partition of space-time, except for overlap on the hyperboloids.
For $\uk$ with all $k_i=0$ we write $0^N$, and we write $\un=(n_1,\ldots,n_N)$, as well as ${}^4\sC$ for the set of all 4-cells.

\begin{figure}[h]
\begin{center}
\begin{tikzpicture}
\draw [very thick, domain=-2.5:2.5, samples=50] plot (\x, {-0.5+sqrt(1+(\x+0.8)*(\x+0.8)))} );
\node at (-2.85,1.65) {$\HHH_{11}$};
\draw [very thick, domain=-2.5:2.5, samples=50] plot (\x, {-0.3+sqrt(0.1+(\x-0.5)*(\x-0.5))} );
\node at (2.9,1.8) {$\HHH_{21}$};
\node at (0,-0.4) {${}^4C_{00}$};
\node at (1.1,1) {${}^4C_{01}$};
\node at (-1.55,1.2) {${}^4C_{10}$};
\node at (0,2.2) {${}^4C_{11}$};
\end{tikzpicture}
\end{center}
\caption{Notation for 4-cells as in \eqref{4Cdef}}
\label{fig:4cellsN=2}
\end{figure}
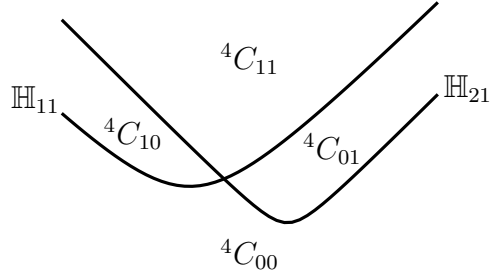

The faces of the 4-cells are pieces of hyperboloids henceforth called \emph{3-cells},
\be
{}^3C_{i\uk} := \HHH_{ik_i}\cap \bigcap_{j\neq i} \Bigl(\future(\HHH_{jk_j}) \cap \past(\HHH_{jk_j+1})  \Bigr)\,,
\ee
where $\uk$ must be such that $k_i\geq 1$.
In fact, ${}^3C_{i\uk}$ is the common boundary of ${}^4C_{\uk}$ and ${}^4C_{\uk'}$, where $k'_i=k_i-1$ and $k'_j=k_j$ for all $j\neq i$.
For example, for two hyperboloids as in Figures~\ref{fig:PFH} and \ref{fig:4cellsN=2}, $P_1= {}^3C_{110}$, $P_2={}^3C_{201}$, and $F_i= {}^3C_{i11}$. The set of all 3-cells will be denoted by ${}^3\sC$.

If two 4-cells border on each other along a 3-cell ${}^3C_{i\uk}$, then the one in the future of $\HHH_{ik_i}$ will henceforth be said to be a \emph{successor} of the one in the past of $\HHH_{ik_i}$, and conversely a \emph{predecessor}. The predecessors of ${}^4C_{\uk}$ are those for which one $k_j$ in $\uk$ has been replaced by $k_j-1$.

We say that a set $S\subseteq \MMM$ is \emph{past complete} if $\past(S)\subseteq S$; correspondingly \emph{future complete}. For example, $\emptyset$ and $\MMM$ are both past and future complete, the past of any set is past complete, an intersection of past complete sets is past complete, ${}^4C_{0^N}$ is past complete, ${}^4C_{\un}$ is future complete, and $\MMM\setminus {}^4C_{\un}$ is past complete. One easily verifies that the complement of a past complete set is future complete and vice versa.

\begin{prop}\label{prop:pastcomplete}
Every (closed) past complete set $S$ except $\emptyset$ and $\MMM$ is the past of its boundary, $S=\past(\partial S)$, and $\partial S$ is a spacelike-or-lightlike hypersurface.
\end{prop}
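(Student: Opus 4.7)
The plan is to prove the two assertions of Proposition~\ref{prop:pastcomplete} in sequence. The inclusion $\past(\partial S) \subseteq S$ is immediate: $S$ is closed, so $\partial S \subseteq S$, and past completeness then gives $\past(\partial S) \subseteq \past(S) \subseteq S$. The reverse inclusion $S \subseteq \past(\partial S)$ is the heart of the first assertion. I would prove it by picking any $x \in S$ and any future-directed timelike ray $x + tv$ ($t \geq 0$) and observing that this ray cannot lie entirely in $S$: if it did, then $\cup_{n\in \NNN} \past(x+nv) = \MMM$ (any point of $\MMM$ lies in the past of a sufficiently far-future timelike translate of itself), so past completeness would force $S = \MMM$, contradicting the hypothesis. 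Therefore $T := \sup\{t \geq 0 : x + tv \in S\}$ is finite; $x + Tv$ lies in $S$ by closedness while points just beyond it do not, so $x + Tv \in \partial S \cap \future(x)$ and hence $x \in \past(\partial S)$.

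For the second assertion I would first show that $\partial S$ is achronal, then deduce that it is a Lipschitz hypersurface. For achronality, suppose for contradiction that $a,b \in \partial S$ with $b$ strictly in the chronological future of $a$. Because $a \in \partial S$, there is a sequence $z_n \to a$ with $z_n \notin S$; since the chronological future is open, $b \in \future(z_n)$ for all sufficiently large $n$. But the complement of $S$ is future complete (as noted just before the proposition), so $z_n \notin S$ forces $b \notin S$, contradicting $b \in \partial S \subseteq S$. Thus no two points of $\partial S$ are strictly timelike-related.

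To make ``hypersurface'' precise I would fix an inertial frame and, for each spatial point $\vx$, inspect the vertical timelike line $\ell_\vx = \{(t,\vx):t\in\RRR\}$. Past completeness makes $S \cap \ell_\vx$ downward-closed, and closedness of $S$ forces it to be $\emptyset$, all of $\ell_\vx$, or a half-line $(-\infty,t_0(\vx)]$. Since $S$ is neither empty nor all of $\MMM$, one shows that $t_0$ is defined on a nonempty open set (using that $\mathrm{int}(S)$ and $S^c$ are open and intersect many vertical lines), and achronality of $\partial S$ translates directly into the bound $|t_0(\vx) - t_0(\vx')| \leq |\vx - \vx'|$. Hence $\partial S$ is locally the graph of a $1$-Lipschitz function of three spatial variables, which is a topological $3$-manifold whose (almost-everywhere defined) tangent planes are spacelike or lightlike. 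The step I expect to be most delicate is this last one: verifying that $t_0$ is defined on an open set and that no stray points of $\partial S$ sit off the graph. Achronality is what rules out the latter, and openness of $\mathrm{int}(S)$ together with $S^c$ being open rules out the former.
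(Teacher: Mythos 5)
Your argument for the inclusion $S\subseteq\past(\partial S)$ is essentially the paper's: the paper also takes an arbitrary $x\in S$, considers a timelike straight line through it, notes that the line cannot lie entirely in $S$ (else $S=\MMM$ by past completeness), and locates the exit point on $\partial S$. Your version with the explicit ray $x+tv$, $t\geq 0$, and $T=\sup\{t:x+tv\in S\}$ is a more careful writeup of the same idea.

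The real difference is that you go considerably further than the paper's own proof, which in fact stops after establishing $S\subseteq\past(\partial S)$ and leaves both the trivial reverse inclusion and the entire ``$\partial S$ is a spacelike-or-lightlike hypersurface'' claim implicit. You supply those missing pieces: the reverse inclusion from closedness plus past completeness, achronality of $\partial S$ from future completeness of $S^c$ together with openness of the chronological future, and the Lipschitz-graph description over a fixed inertial frame. Your achronality argument is sound (it uses the fact, stated by the paper just before the proposition, that the complement of a past complete set is future complete). One small simplification you could make at the end: $t_0$ is actually defined on \emph{all} of $\RRR^3$, not merely on a nonempty open set, because $S\ne\emptyset$ means some backward light cone lies in $S$ and hence every vertical timelike line meets $S$, while $S\ne\MMM$ together with past completeness means no vertical line lies entirely in $S$. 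With that noted, the hedge in your last paragraph disappears, and your argument is a complete proof of both assertions --- more complete, in fact, than what the paper prints.
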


\begin{proof}
For any $x\in S$, consider a timelike straight line (geodesic) $\gamma$ through $x$; there must be a point on $\gamma$ outside $S$, or else $S=\MMM$ by past completeness. Again by past completeness, $\gamma$ must lie in $S$ up to a point $\gamma(s_0)$ and outside from there onwards. So $\gamma(s_0)$ must lie on $\partial S$, and $x\in\past(\gamma(s_0))\subseteq \past(\partial S)$.
\end{proof}

While the exact location and shape of ${}^4C_{\uk}$ depends on the hyperboloids, many relations between the 4-cells, such as which one borders on which others along which 3-cells, can be read off from the index $\uk$. That is why we also call $\uk\in \prod_i\{0,\ldots,n_i\}$ an \emph{abstract 4-cell} and a pair $(i,\uk)$ such that $k_i\geq 1$ an \emph{abstract 3-cell}. The set of abstract 4-cells (respectively, 3-cells) is ${}^4\sA:=\prod_i\{0,\ldots,n_i\}$, respectively ${}^3\sA :=\{(i,\uk)\in \{1\ldots N\}\times {}^4\sA:k_i\geq 1\}$. The \emph{future faces} of the abstract 4-cell $\uk$ are the abstract 3-cells $(i,\uk')$ with $k'_i=k_i+1$ and $k'_j=k_j$ for all $j\neq i$ (if they exist); the \emph{past faces} of $\uk$ are the $(i,\uk)$ with $k_i\geq 1$. The \emph{predecessors} of $\uk$ are those abstract 4-cells for which one $k_j$ in $\uk$ has been replaced by $k_j-1$; correspondingly \emph{successors}. A set $V$ of abstract 4-cells will be called \emph{predecessor complete} iff\footnote{iff = if and only if} it contains every predecessor of each of its elements; correspondingly \emph{successor complete}. A set is successor complete iff its complement is predecessor complete.

\begin{prop}
If a set $V\subseteq {}^4\sA$ is predecessor complete, then the corresponding space-time set $S(V) = \cup_{\uk \in V} {}^4C_{\uk}$ is past complete. Furthermore, if the hyperboloids are such that ${}^4C_{\uk}$ has non-empty interior (or non-zero 4-volume) for every $\uk\in {}^4\sA$, then also the converse is true: $S(V)$ is past complete only if $V$ is predecessor complete.
\end{prop}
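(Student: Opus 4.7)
The proof divides into the two implications; in both, the key tool is the componentwise partial order on $\prod_i\{0,\ldots,n_i\}$. For the forward direction, assume $V$ is predecessor complete and let $y\in\past(S(V))$, say $y\in\past(x)$ with $x\in{}^4C_{\uk}$ and $\uk\in V$. Pick an abstract 4-cell $\uk'$ with $y\in{}^4C_{\uk'}$ (one exists since the 4-cells cover $\MMM$ up to their common boundaries). By the definition \eqref{4Cdef}, $x\in\past(\HHH_{ik_i+1})$ for every $i$, so transitivity of $\past$ gives $y\in\past(\HHH_{ik_i+1})$; combined with $y\in\future(\HHH_{ik'_i})$ and the monotone ordering of the per-particle hyperboloids (which follows from $x_{ik}\in\future(x_{i,k-1})$), this forces $k'_i\le k_i$ for every $i$. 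Hence $\uk'$ is reached from $\uk$ by a finite chain of single-coordinate decrements, and iterating predecessor-completeness yields $\uk'\in V$, so $y\in S(V)$.

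For the converse, assume every ${}^4C_\uk$ has non-empty interior and $S(V)$ is past complete. I argue by contradiction: suppose $\uk\in V$ has an immediate predecessor $\uk'\notin V$ differing from $\uk$ only in coordinate $j$, with $k'_j=k_j-1$. The plan is to construct $x\in{}^4C_\uk$ and $y\in\mathrm{int}({}^4C_{\uk'})$ with $y\in\past(x)$; then $x\in S(V)$ and past-completeness give $y\in S(V)$, and since the interiors of distinct 4-cells are pairwise disjoint, this forces $\uk'\in V$, contradicting the choice of $\uk'$.

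The witness pair $(x,y)$ is obtained by exploiting the common face ${}^3C_{j\uk}$ of the two cells on the spacelike hyperboloid $\HHH_{jk_j}$: the defining constraints indexed by $i\ne j$ coincide for ${}^4C_\uk$ and ${}^4C_{\uk'}$, and the two cells lie on opposite sides of $\HHH_{jk_j}$. The main obstacle is to show that ${}^3C_{j\uk}$ has non-empty relative interior in $\HHH_{jk_j}$, so that one can pick $z\in{}^3C_{j\uk}$ lying strictly inside every constraint indexed by $i\ne j$. I would argue this from the non-emptiness of $\mathrm{int}({}^4C_\uk)$ together with the fact that $\HHH_{jk_j}$ is a smooth spacelike hypersurface: starting from an interior point $p$ of ${}^4C_\uk$ chosen close enough to $\HHH_{jk_j}$, a past-timelike geodesic transverse to $\HHH_{jk_j}$ crosses it without first violating any other constraint (the latter being open and hence satisfied on a whole neighborhood of $p$), and the crossing point serves as the desired $z$. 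A small future-timelike displacement $v$ transverse to $\HHH_{jk_j}$ at $z$ then produces $x:=z+\epsilon v\in\mathrm{int}({}^4C_\uk)$ and $y:=z-\epsilon v\in\mathrm{int}({}^4C_{\uk'})$ for small $\epsilon>0$, with $y\in\past(x)$ because $x-y=2\epsilon v$ is future timelike. The forward direction is thus bookkeeping on the partial order, while the real content of the converse is this local geometric construction at the shared face.
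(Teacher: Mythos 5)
Your forward direction is correct and close in spirit to the paper's (the paper follows a causal curve from $x$ to $y$ and notes that each hyperboloid crossing moves to a predecessor; you compare the index vectors directly and observe $\uk'\le\uk$ componentwise). Both routes rest on the same monotonicity of the per-particle hyperboloids, and the small bookkeeping about $y$ lying exactly on a boundary hyperboloid (forcing you to \emph{choose} the lower of the two adjacent 4-cells) is easily handled.

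The converse is where the approaches coincide and where there is a gap. The paper simply asserts, from non-emptiness of all 4-cell interiors, that interior points $x\in{}^4C_{\uk}$, $y\in{}^4C_{\uk'}$ with $y\in\past(x)$ exist. You try to prove it via the shared face ${}^3C_{j\uk}$, which is the natural strategy, and you rightly flag ``show that ${}^3C_{j\uk}$ has non-empty relative interior'' as the crux. But the argument you give for it is circular: the step ``starting from an interior point $p$ of ${}^4C_\uk$ chosen close enough to $\HHH_{jk_j}$'' presupposes that $\mathrm{int}({}^4C_\uk)$ contains points arbitrarily close to $\HHH_{jk_j}$, and more than that, points whose distance to $\HHH_{jk_j}$ is smaller than their distance to \emph{every other} bounding hyperboloid $\HHH_{ik_i}$, $\HHH_{ik_i+1}$ ($i\ne j$). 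Nothing in $\mathrm{int}({}^4C_\uk)\ne\emptyset$ guarantees this: a priori the entire floor of ${}^4C_\uk$ could consist of the hyperboloids $\HHH_{ik_i}$ with $i\ne j$, with the face on $\HHH_{jk_j}$ degenerate or empty while the cell still has an open interior further away. Equivalently, you need that the ``tube'' $T=\bigcap_{i\ne j}\bigl(\future(\HHH_{ik_i})\cap\past(\HHH_{ik_i+1})\bigr)$ has an interior point \emph{on} $\HHH_{jk_j}$ given that it has interior points above and below it, which is a connectedness statement about $T$ that you neither prove nor reduce to anything you have established; you also never invoke the hypothesis that \emph{all} 4-cells (not just $\uk$ and $\uk'$) have non-empty interior, which is presumably where the missing leverage should come from. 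So the essential content of the converse — the existence of the causally ordered witness pair — is assumed rather than proved. This is the same assertion the paper makes without elaboration, but since you set out to justify it and explicitly name it as the main obstacle, the proposal as written has a genuine gap there.
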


\begin{proof}
To see that $S(V)$ is past complete, consider $x\in S(V)$ and $y$ in the past of $x$. The straight line (or any causal curve) from $x$ to $y$, when crossing hyperboloids, enters a predecessor of the 4-cell, and thus remains in $S(V)$. We remark that if some ${}^4C_{\uk}$ is empty (as would happen if $x_{ik_i-1} \in \future(x_{jk_j+1})$ with $j\neq i$), then $S(\{\uk\}) = {}^4C_{\uk}=\emptyset$ is past complete although $V=\{\uk\}$ is not predecessor complete.

Now assuming that the 4-cells have non-empty interior, if $S(V)$ were past complete but $V$ not predecessor complete, then let $\uk'\notin V$ be a predecessor of $\uk\in V$. Since the interiors are non-empty, there are interior points $x\in {}^4C_{\uk}$ and $y\in {}^4C_{\uk'}$ such that $y\in \past(x)$, in contradiction to $y\in S(V)$.
\end{proof}

Let $\sN$ be the set of predecessor complete sets of abstract 4-cells. It becomes a directed network by putting a directed edge from $V_1$ to $V_2$ whenever $V_2$ can be obtained from $V_1$ by adding one abstract 4-cell, $V_2=V_1\cup\{\uk\}$. In particular, every edge is related to some abstract 4-cell, while the same abstract 4-cell can occur for several edges at different vertices. An \emph{admissible sequence} $(V_1,\ldots,V_{r+1})$ is a path in $\sN$ (using only edges in their direction) from the vertex $\emptyset$ to the vertex ${}^4\sA$. We will show in Proposition~\ref{prop:existsadmseq} that admissible sequences exist.

\begin{figure}[h]
\begin{center}
\begin{tikzpicture}
\filldraw (0,1.41) circle [radius=0.07];
\filldraw (0,0) circle [radius=0.07];
\filldraw (1,-1) circle [radius=0.07];
\filldraw (-1,-1) circle [radius=0.07];
\filldraw (0,-2) circle [radius=0.07];
\filldraw (0,-3.41) circle [radius=0.07];
\draw[->] (0,-2) -- (-0.5,-1.5);
\draw[->] (0,-2) -- (0.5,-1.5);
\draw[->] (-1,-1) -- (-0.5,-0.5);
\draw[->] (1,-1) -- (0.5,-0.5);
\draw[->] (0,0) -- (0,0.71);
\draw[->] (0,-3.41) -- (0,-2.7);
\draw (-0.5,-1.5) -- (-1,-1) (0.5,-1.5) -- (1,-1) (-0.5,-0.5) -- (0,0) (0.5,-0.5) -- (0,0) (0,0.71) -- (0,1.41) (0,-2.7) -- (0,-2);
\node at (0.6,-2.1) {$\{00\}$};
\node at (1.9,-1) {$\{00,01\}$};
\node at (-1.9,-1) {$\{00,10\}$};
\node at (1.2,0.1) {$\{00,01,10\}$};
\node at (0,1.8) {$\{00,01,10,11\}={}^4\sA$};
\node at (0,-3.7) {$\emptyset$};
\end{tikzpicture}
\end{center}
\caption{The directed network $\sN$ for two hyperboloids as in Figure~\ref{fig:4cellsN=2}. There are two paths from $\emptyset$ to ${}^4\sA$, both of which are admissible sequences.}
\label{fig:network}
\end{figure}
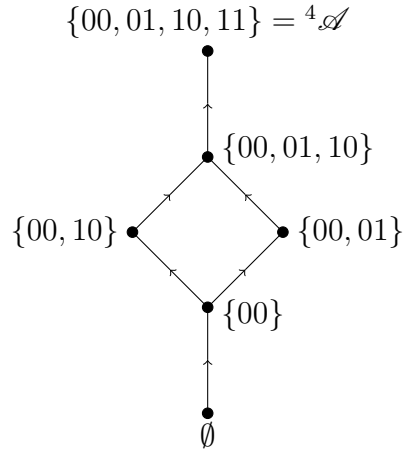

\begin{figure}[h]
\begin{center}
\begin{minipage}{3.5cm}
\begin{tikzpicture}[scale=0.7]
\draw[fill=gray, draw=gray] (-2.5,-0.5) --plot[domain=-2.5:-0.342, samples=30] (\x, {-0.5+sqrt(1+(\x+0.8)*(\x+0.8))} ) -- (-0.342,-0.5) -- cycle;
\draw[fill=gray, draw=gray] (-0.342,-0.5) --plot[domain=-0.342:2.5, samples=30] (\x, {-0.3+sqrt(0.1+(\x-0.5)*(\x-0.5))} ) -- (2.5,-0.5) -- cycle;
\draw [very thick, domain=-2.5:2.5, samples=50] plot (\x, {-0.5+sqrt(1+(\x+0.8)*(\x+0.8)))} );
\draw [very thick, domain=-2.5:2.5, samples=50] plot (\x, {-0.3+sqrt(0.1+(\x-0.5)*(\x-0.5))} );
\end{tikzpicture}\\[-8mm]
\begin{center}
$\{00\}$
\end{center}
\end{minipage}~~
\begin{minipage}{3.5cm}
\begin{tikzpicture}[scale=0.7]
\draw[fill=gray, draw=gray] (-2.5,-0.5) --plot[domain=-2.5:-0.342, samples=30] (\x, {-0.3+sqrt(0.1+(\x-0.5)*(\x-0.5))} ) -- (-0.342,-0.5) -- cycle;
\draw[fill=gray, draw=gray] (-0.342,-0.5) --plot[domain=-0.342:2.5, samples=30] (\x, {-0.3+sqrt(0.1+(\x-0.5)*(\x-0.5))} ) -- (2.5,-0.5) -- cycle;
\draw [very thick, domain=-2.5:2.5, samples=50] plot (\x, {-0.5+sqrt(1+(\x+0.8)*(\x+0.8)))} );
\draw [very thick, domain=-2.5:2.5, samples=50] plot (\x, {-0.3+sqrt(0.1+(\x-0.5)*(\x-0.5))} );
\end{tikzpicture}\\[-8mm]
\begin{center}
$\{00,10\}$
\end{center}
\end{minipage}~~
\begin{minipage}{3.5cm}
\begin{tikzpicture}[scale=0.7]
\draw[fill=gray, draw=gray] (-2.5,-0.5) --plot[domain=-2.5:-0.342, samples=30] (\x, {-0.5+sqrt(1+(\x+0.8)*(\x+0.8))} ) -- (-0.342,-0.5) -- cycle;
\draw[fill=gray, draw=gray] (-0.342,-0.5) --plot[domain=-0.342:2.5, samples=30] (\x, {-0.5+sqrt(1+(\x+0.8)*(\x+0.8))} ) -- (2.5,-0.5) -- cycle;
\draw [very thick, domain=-2.5:2.5, samples=50] plot (\x, {-0.5+sqrt(1+(\x+0.8)*(\x+0.8)))} );
\draw [very thick, domain=-2.5:2.5, samples=50] plot (\x, {-0.3+sqrt(0.1+(\x-0.5)*(\x-0.5))} );
\end{tikzpicture}\\[-8mm]
\begin{center}
$\{00,01\}$
\end{center}
\end{minipage}~~
\begin{minipage}{3.5cm}
\begin{tikzpicture}[scale=0.7]
\draw[fill=gray, draw=gray] (-2.5,-0.5) --plot[domain=-2.5:-0.342, samples=30] (\x, {-0.3+sqrt(0.1+(\x-0.5)*(\x-0.5))} ) -- (-0.342,-0.5) -- cycle;
\draw[fill=gray, draw=gray] (-0.342,-0.5) --plot[domain=-0.342:2.5, samples=30] (\x, {-0.5+sqrt(1+(\x+0.8)*(\x+0.8))} ) -- (2.5,-0.5) -- cycle;
\draw [very thick, domain=-2.5:2.5, samples=50] plot (\x, {-0.5+sqrt(1+(\x+0.8)*(\x+0.8)))} );
\draw [very thick, domain=-2.5:2.5, samples=50] plot (\x, {-0.3+sqrt(0.1+(\x-0.5)*(\x-0.5))} );
\end{tikzpicture}\\[-8mm]
\begin{center}
$\{00,01,10\}$
\end{center}
\end{minipage}
\end{center}
\caption{The space-time sets (unions of 4-cells) $S(V)$ corresponding to some vertices $V$ in $\sN$ for two hyperboloids}
\label{fig:network4cells}
\end{figure}
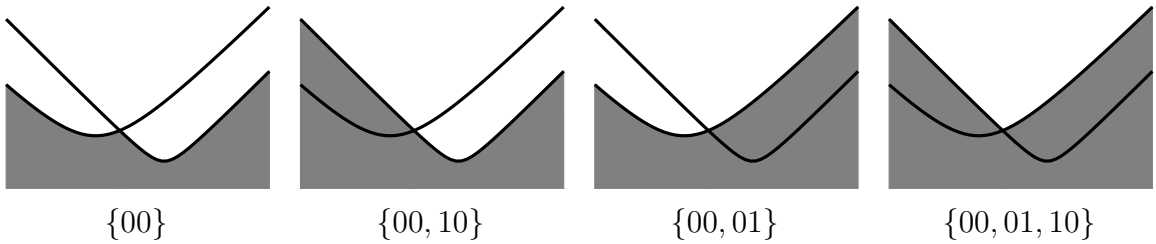

We say that the admissible sequence $(V_1,\ldots,V_{r+1})$ \emph{crosses the 4-cell $\uk$ in step $n$} iff $V_{n+1}=V_n\cup \{\uk\}$. We say that it \emph{crosses the 3-cell $(i,\uk)\in {}^3\sA$ in step $n$} iff $V_{n+1}=V_n\cup \{\uk\}$.

\begin{prop}\label{prop:cross}
Every admissible sequence crosses every 4-cell and every 3-cell exactly once.
\end{prop}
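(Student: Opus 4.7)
The plan is to argue by pure counting: the result really only uses the definitions of ``edge in $\sN$'' and ``crosses,'' together with the fact that the path starts at $\emptyset$ and ends at ${}^4\sA$. I would first handle the 4-cells and then deduce the 3-cell statement as an immediate corollary, since the definition of ``crosses the 3-cell $(i,\uk)$ in step $n$'' has been rigged to coincide with ``adds the abstract 4-cell $\uk$ in step $n$.''

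First, I would observe that every directed edge of $\sN$ increases the cardinality of the vertex by exactly one: an edge from $V_1$ to $V_2$ means $V_2 = V_1 \cup \{\uk\}$, and to be a genuine edge one must have $\uk\notin V_1$ (otherwise $V_2 = V_1$ and there is no edge). Consequently, along the admissible sequence $(V_1,\ldots,V_{r+1})$ one has $|V_{n+1}| = |V_n|+1$, so from $V_1=\emptyset$ and $V_{r+1}={}^4\sA$ I conclude $r = |{}^4\sA|$. Since each of these $r$ steps contributes a distinct abstract 4-cell (a cell already in $V_n$ cannot be the one ``added''), and there are exactly $|{}^4\sA|$ 4-cells available, every abstract 4-cell is added in exactly one step. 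By the definition preceding the proposition, this is exactly the statement that every 4-cell is crossed exactly once.

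Second, I would invoke the definition that the abstract 3-cell $(i,\uk)\in {}^3\sA$ is crossed in step $n$ iff $V_{n+1}=V_n\cup\{\uk\}$; that is, a 3-cell is crossed precisely when the 4-cell forming its future side (the unique $\uk$ with $k_i\geq 1$ to which $(i,\uk)$ is attached) is added. Combining with the previous paragraph, which showed that $\uk$ is added in a unique step $n_{\uk}$, I conclude that the 3-cell $(i,\uk)$ is crossed in exactly that step $n_{\uk}$ and in no other, finishing the proof.

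I do not expect any real obstacle here: the proposition is essentially bookkeeping built into the definitions of $\sN$ and the two crossing notions. The only point deserving a sentence of care is the observation that the genuine edges of $\sN$ are strict (they enlarge $V$), which is what guarantees that the length of the admissible sequence matches $|{}^4\sA|$ and hence forces the bijection between steps and 4-cells.
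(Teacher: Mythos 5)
Your proof is correct and takes essentially the same bookkeeping approach as the paper's (each step adds one new 4-cell, the path runs from $\emptyset$ to ${}^4\sA$, so every 4-cell is added exactly once, and the 3-cell statement follows by definition); you merely spell out the cardinality count $r=|{}^4\sA|$ a bit more explicitly than the paper does.
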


\begin{proof}
Since each step in the path adds exactly one 4-cell, and since the last element of the sequence is the set of all 4-cells, each 4-cell must occur sooner or later, and cannot occur twice. The 3-cell $(i,\uk)$ gets crossed exactly when the 4-cell $\uk$ gets crossed.
\end{proof}

In particular, $r$ equals the number of 4-cells. Since the starting point is fixed, an admissible sequence can be characterized by specifying which edge to use in each step. Since the edges are labeled with abstract 4-cells, it can be specified by the sequence $(\uk_1,\ldots,\uk_r)$ of abstract 4-cells in the order in which they are crossed. Such a sequence is an ordering of the set of all abstract 4-cells. However, not every ordering of the 4-cells corresponds to an admissible sequence.

\begin{prop}\label{prop:admseq}
An ordering $(\uk_1,\ldots,\uk_r)$ of the 4-cells corresponds to an admissible sequence iff for every $n\in\{1,\ldots,r\}$, every predecessor of $\uk_n$ occurred earlier. 
\end{prop}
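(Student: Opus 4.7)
The plan is to rewrite the admissibility of a sequence in terms of the running sets $V_n:=\{\uk_1,\ldots,\uk_{n-1}\}$ (with $V_1=\emptyset$, $V_{r+1}={}^4\sA$) and then apply predecessor completeness locally at each step. Since every edge of $\sN$ from $V_n$ to $V_{n+1}$ adds exactly the single abstract 4-cell $\uk_n$, the ordering $(\uk_1,\ldots,\uk_r)$ corresponds to an admissible sequence iff each $V_n$ lies in $\sN$, i.e.\ iff every $V_n$ is predecessor complete. So the task reduces to showing that predecessor completeness of all $V_n$ is equivalent to the stated local condition ``every predecessor of $\uk_n$ occurs among $\uk_1,\ldots,\uk_{n-1}$.''

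For the forward direction, assume admissibility, fix $n$, and let $\uk'$ be a predecessor of $\uk_n$. Note $\uk'\neq \uk_n$ (predecessors are defined by decreasing one coordinate). Since $V_{n+1}$ is predecessor complete and contains $\uk_n$, it must contain $\uk'$; hence $\uk'\in V_{n+1}\setminus\{\uk_n\}=V_n$, which means $\uk'$ occurred earlier.

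For the backward direction, assume the local condition and prove by induction on $n$ that $V_n$ is predecessor complete. The base case $V_1=\emptyset$ is immediate. For the step, take any $\uk\in V_{n+1}$ and any predecessor $\uk'$ of $\uk$. If $\uk\in V_n$, then $\uk'\in V_n\subseteq V_{n+1}$ by the inductive hypothesis. If $\uk=\uk_n$, then the local condition gives $\uk'\in V_n\subseteq V_{n+1}$. In either case $V_{n+1}$ contains all predecessors of its elements. Finally, $V_{r+1}$ contains all $r$ abstract 4-cells, so $V_{r+1}={}^4\sA$, and the path $(V_1,\ldots,V_{r+1})$ is indeed an admissible sequence.

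There is no real obstacle here: the statement is essentially a reformulation, and the only subtlety is keeping straight that an abstract 4-cell is not one of its own predecessors, so that ``predecessor of $\uk_n$ lies in $V_{n+1}$'' is equivalent to ``lies in $V_n$.'' Everything else is a clean induction on the path length.
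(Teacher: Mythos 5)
Your proof is correct and follows essentially the same approach as the paper: both directions reduce the claim to predecessor completeness of the running sets $V_n$, with an induction for the ``if'' direction; you are just a bit more explicit than the paper in spelling out that induction step.
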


\begin{proof}
``only if'': Otherwise $V_{n+1}=V_n\cup \{\uk_n\}$ is not predecessor complete, as $V_n=\{\uk_1,\ldots,\uk_{n-1}\}$.

``if'': The sequence of 4-cells tells us in each step of the path in $\sN$ which edge to take. In order to verify that such edges exist in $\sN$, we need to check that, for each step from $V_n$ to $V_{n+1}=V_n\cup \{\uk_n\}$, $V_{n+1}$ is predecessor complete. It is because each predecessor of $\uk_n$ is contained in $V_n=\{\uk_1,\ldots,\uk_{n-1}\}$ by assumption, and because $V_n$ is predecessor complete. Since every 4-cell occurs, the end point of the path in $\sN$ is the set of all 4-cells.
\end{proof} 

As a consequence, the sequence of 4-cells must begin with $0^N$ (the only one without predecessor) and end with $\un$ (the only one without successor). For $N=2$ and $n_1=1=n_2$ as in Figures~\ref{fig:4cellsN=2}, \ref{fig:network}, and \ref{fig:network4cells}, there are two orderings as described in Proposition~\ref{prop:admseq}: (00,01,10,11) and (00,10,01,11).

\begin{prop}\label{prop:existsadmseq}
For every choice of $N,n_1,\ldots,n_N\in\NNN$, there exists an admissible sequence.
\end{prop}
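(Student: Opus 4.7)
The plan is to use Proposition~\ref{prop:admseq} and exhibit an explicit ordering of the abstract 4-cells $\prod_i \{0,\ldots,n_i\}$ satisfying the stated predecessor condition. The key observation is that a predecessor of $\uk$ is obtained by decreasing exactly one component $k_j$ by $1$, so the integer $|\uk| := k_1 + \cdots + k_N$ strictly decreases when passing to a predecessor. This suggests sorting the 4-cells by $|\uk|$.

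Concretely, I would define a total order $\prec$ on ${}^4\sA = \prod_i\{0,\ldots,n_i\}$ by $\uk \prec \uk'$ iff either $|\uk| < |\uk'|$, or $|\uk|=|\uk'|$ and $\uk$ precedes $\uk'$ lexicographically (any tie-breaker will do). List the $r = \prod_i (n_i+1)$ abstract 4-cells in this order as $(\uk_1,\ldots,\uk_r)$. Then for each $n$, every predecessor $\uk'$ of $\uk_n$ satisfies $|\uk'| = |\uk_n|-1 < |\uk_n|$, hence $\uk' \prec \uk_n$, hence $\uk'$ appears among $\uk_1,\ldots,\uk_{n-1}$. By Proposition~\ref{prop:admseq} this ordering corresponds to an admissible sequence, which is precisely what was to be shown.

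There is no real obstacle here; the statement is essentially the assertion that the componentwise partial order on a finite product of chains admits a linear extension, and sorting by the sum of coordinates gives one explicitly. The only thing worth remarking on is that the sequence must begin at $0^N$ and end at $\un$, which is automatic since these are the unique minimum and maximum of $|\uk|$ on ${}^4\sA$ (any tie-breaking rule respects this because $0^N$ is the only element of sum $0$ and $\un$ the only element of sum $\sum_i n_i$).
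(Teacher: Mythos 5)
Your proof is correct and follows essentially the same approach as the paper: the author also orders the abstract 4-cells by the coordinate sum $m(\uk)=k_1+\cdots+k_N$ (with ties broken arbitrarily rather than lexicographically), and invokes Proposition~\ref{prop:admseq} in the same way.
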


\begin{proof}
For every $\uk$, define $m(\uk)= k_1+\ldots+k_N$. 
We specify the ordering of 4-cells. Begin with $\uk=0^N$, the only 4-cell with $m(\uk)=0$. Then list, in arbitrary order, all 4-cells $\uk$ with $m(\uk)=1$. Then, in arbitrary order, all 4-cells $\uk$ with $m(\uk)=2$, and so on up to $m(\uk)=\nu$, which occurs only for $\uk=\un$. Then all 4-cells have occurred exactly once. Every predecessor $\uk'$ of $\uk$ occurred earlier than $\uk$ because $m(\uk')=m(\uk)-1$. (We remark that not every admissible sequence needs to have this structure.)
\end{proof}

Of two admissible sequences, we say that they differ by an \emph{elementary deformation} if their associated orderings of 4-cells differ only by an exchange of two successive 4-cells, i.e., one is $(\uk_1,\ldots,\uk_r)$ and the other
\be\label{deform}
\bigl(\uk_1,\ldots,\uk_{n-1},\uk_{n+1},\uk_n,\uk_{n+2},\ldots,\uk_r\bigr)
\ee
for some $n\in \{1,\ldots,r-1\}$. For example, in Figure~\ref{fig:network} this exchange corresponds to switching from the left path to the right one or vice versa. An exchange of 4-cells as in \eqref{deform}, applied to an admissible sequence, does not necessarily yield another admissible sequence, but here we need the converse fact:

\begin{prop}\label{prop:deform}
Any two admissible sequences can be obtained from each other through finitely many elementary deformations.
\end{prop}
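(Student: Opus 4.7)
The plan is to prove the proposition by a bubble-sort style induction that brings the two admissible sequences into agreement one position at a time from the left. Let $\pi=(\uk_1,\ldots,\uk_r)$ and $\pi'=(\uell_1,\ldots,\uell_r)$ be the two given admissible sequences, and let $m$ be the smallest index with $\uk_m\neq\uell_m$; if no such index exists, the two sequences coincide and we are done. Setting $\uk:=\uell_m$, the cell $\uk$ occupies some position $j>m$ in $\pi$, because positions $1,\ldots,m-1$ of $\pi$ contain $\uk_1,\ldots,\uk_{m-1}=\uell_1,\ldots,\uell_{m-1}$ (none of which equals $\uk=\uell_m$) and position $m$ contains $\uk_m\neq \uk$. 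The goal is to move $\uk$ leftward in $\pi$ to position $m$ by a chain of elementary deformations, producing an admissible sequence $\tilde\pi$ that agrees with $\pi'$ on positions $1,\ldots,m$; the inductive hypothesis then supplies a chain of elementary deformations from $\tilde\pi$ to $\pi'$, which concatenated with the chain from $\pi$ to $\tilde\pi$ gives the required deformation.

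The central ingredient is the claim that for every $i\in\{m,\ldots,j-1\}$ the 4-cell $\uk_i$ is incomparable to $\uk$ in the componentwise partial order on abstract 4-cells. This follows because $\uk_i$ appears before $\uk$ in $\pi$ (at position $i<j$) and after $\uk$ in $\pi'$ (since $\uk_i$ is not among $\uell_1,\ldots,\uell_m$, it must sit at some position $>m$ in $\pi'$), while a strict comparability $\uk_i<\uk$ or $\uk<\uk_i$ would, by iterating Proposition~\ref{prop:admseq} along a chain of immediate predecessors, force $\uk_i$ to occur on a fixed side of $\uk$ in every admissible sequence, contradicting one of the two observed orderings. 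Granted incomparability, the exchange of positions $j-1$ and $j$ in $\pi$ is an elementary deformation to another admissible sequence: the predecessor checks of Proposition~\ref{prop:admseq} are unchanged at positions outside $\{j-1,j\}$, and at the two affected positions the only requirement that might fail is that $\uk_{j-1}$ be an immediate predecessor of $\uk$, which incomparability rules out. Iterating the argument to exchange $\uk$ with its successive left neighbors $\uk_{j-2},\uk_{j-3},\ldots,\uk_m$ yields $\tilde\pi$ after $j-m$ elementary deformations; the incomparability argument depends only on the positions of the $\uk_i$ in the fixed $\pi'$ and in the original $\pi$, so it remains valid at every intermediate stage.

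The main obstacle I anticipate is the incomparability claim itself, because the admissibility condition of Proposition~\ref{prop:admseq} literally involves only immediate predecessors, while we need it to pin down the relative order of every pair of 4-cells that is comparable in the full componentwise partial order. Handling this requires only the short chain-of-predecessors argument sketched above, but it is the one nontrivial ingredient in an otherwise routine outer induction on the length of the common initial segment of $\pi$ and $\pi'$; the base case is $\pi=\pi'$, where the empty sequence of deformations suffices.
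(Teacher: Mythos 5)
Your proposal is correct and uses essentially the same strategy as the paper: bubble-sort the two orderings into agreement by elementary exchanges, justifying each swap by showing the cell being passed over cannot be a predecessor of the cell being moved. The paper's argument is a bit terser (it simply observes that in each intermediate ordering predecessors still come earlier because they do so in both given orderings), whereas you prove the slightly stronger incomparability claim in the componentwise partial order; both routes are valid.
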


\begin{proof}
Let $(\uk_1,\ldots,\uk_r)$ be the ordering of 4-cells corresponding to one of the two admissible sequences and $(\uk'_1,\ldots,\uk'_r)$ the other. Apply the following elementary deformations to the primed ordering. Find the place where $\uk_1$ occurs and move $\uk_1$ one place to the left in the primed ordering (by exchange with its left neighbor). The resulting ordering corresponds to an admissible sequence because $\uk_1$ has no predecessor. Likewise, $\uk_1$ can be moved again to the left, in fact repeatedly until it reaches the first position. Repeating the procedure, we can move $\uk_2$ to the second position and so on until we have reached the unprimed ordering. In each intermediate ordering, predecessors always occur earlier, because they did in the two given orderings.
\end{proof}

\subsubsection{Definition of $L$}

We use an admissible sequence to define the operator ordering in $L(\ux)$, and then proceed to show that the operator $L(\ux)$ does not, in fact, depend on the choice of admissible sequence.

So fix an admissible sequence. Since each $\HHH_{ik}$ is partitioned into 3-cells, there is exactly one 3-cell ${}^3C(x_{ik})$ containing $x_{ik}$ (except in the probability-0 case that $x_{ik}$ lies on the boundary between two 3-cells on $\HHH_{ik}$, which we ignore). To the flash $x_{ik}$ we associate the operator
\be\label{Kdef}
K(x_{ik}):= U^0_{\HHH_{ik}} \, P_{\HHH_{ik}}\bigl( g_{x_{ik-1},{}^3C(x_{ik}),x_{ik},i} \bigr) \, U^{\HHH_{ik}}_0 \,.
\ee  
We define $L(\ux)$ as the product of the $K(x_{ik})$ in the order from right to left in which the 3-cells are crossed in the admissible sequence. Now in some steps of the sequence, several 3-cells are crossed in the same step. Among these, it does not matter which order we choose, as their operators commute:

\begin{prop}\label{prop:3Ccommute}
Assume interaction locality and consider $V$ and $V'=V\cup \{\uk\}$ in $\sN$. If ${}^3C(x_{ik})$ and ${}^3C(x_{j\ell})$ are two 3-cells in the common boundary of $S(V)$ and ${}^4C_{\uk}$ (so $k=k_i$ and $\ell=k_j$), then $K(x_{ik})$ commutes with $K(x_{j\ell})$. As a consequence, every admissible sequence unambiguously defines a product $L(\ux)$.
\end{prop}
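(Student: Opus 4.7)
The plan is to produce a single spacelike surface $\Sigma$ that contains both 3-cells ${}^3C(x_{ik})$ and ${}^3C(x_{j\ell})$, then use interaction locality to rewrite each of $K(x_{ik})$ and $K(x_{j\ell})$ as the transport to $\Sigma_0$ of a multiplication operator on $\Sigma$, and finally exploit the elementary fact that any two multiplication operators from the same PVM commute. I will take $\Sigma:=\partial S(V)$. Note first that necessarily $i\neq j$, since the past faces of a single abstract 4-cell $\uk$ are indexed by distinct particle labels, so two distinct 3-cells in the common boundary of $S(V)$ and ${}^4C_{\uk}$ carry distinct labels.

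Since $V$ is predecessor complete and neither $\emptyset$ nor ${}^4\sA$, the set $S(V)$ is past complete and distinct from $\emptyset$ and $\MMM$, so by Proposition~\ref{prop:pastcomplete} its boundary $\Sigma$ is a spacelike-or-lightlike hypersurface; past completeness moreover forces every timelike straight line to enter $S(V)$ and never leave, hence to cross $\Sigma$ exactly once, placing $\Sigma\in\sS$. By hypothesis both 3-cells lie in the common boundary of $S(V)$ and ${}^4C_{\uk}$, so both are contained in $\Sigma$; in particular ${}^3C(x_{ik})\subseteq \HHH_{ik}\cap\Sigma$ and ${}^3C(x_{j\ell})\subseteq \HHH_{j\ell}\cap\Sigma$. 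The profile function $g_{x_{ik-1},{}^3C(x_{ik}),x_{ik}}$ is, by \eqref{gAdef}, supported inside ${}^3C(x_{ik})$, so consequence \eqref{ILf} of interaction locality gives
\[
U^{\Sigma}_{\HHH_{ik}}\, P_{\HHH_{ik}}\bigl(g_{x_{ik-1},{}^3C(x_{ik}),x_{ik},i}\bigr)\, U^{\HHH_{ik}}_{\Sigma} = P_{\Sigma}\bigl(g_{x_{ik-1},{}^3C(x_{ik}),x_{ik},i}\bigr),
\]
and splitting $U^0_{\HHH_{ik}}=U^0_{\Sigma}\,U^{\Sigma}_{\HHH_{ik}}$ together with the analogous factorization of $U^{\HHH_{ik}}_0$ rewrites $K(x_{ik})=U^0_\Sigma\, P_\Sigma(g_{x_{ik-1},{}^3C(x_{ik}),x_{ik},i})\, U^\Sigma_0$. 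The identical manipulation puts $K(x_{j\ell})$ in the form $U^0_\Sigma\, P_\Sigma(g_{x_{j\ell-1},{}^3C(x_{j\ell}),x_{j\ell},j})\, U^\Sigma_0$.

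Both operators are now sandwiches $U^0_\Sigma(\,\cdot\,)U^\Sigma_0$ with the middle factor a multiplication operator on the single PVM $P_\Sigma$ (and moreover acting in different variables $i\neq j$), so they commute and hence $K(x_{ik})K(x_{j\ell})=K(x_{j\ell})K(x_{ik})$. For the final clause, the only ambiguity an admissible sequence leaves in the definition of $L(\ux)$ is the relative order of those 3-cells crossed in a common step $V\to V\cup\{\uk\}$; each such 3-cell is a past face of $\uk$ in the common boundary of $S(V)$ with ${}^4C_{\uk}$, so the commutation just established renders the product independent of those choices. The main point requiring care is the verification that $\partial S(V)$ really belongs to $\sS$ and thus carries a well-defined unitary hypersurface evolution and PVM; this is precisely what Proposition~\ref{prop:pastcomplete} combined with the past completeness of $S(V)$ supplies, so that \eqref{ILf} is legitimately applicable.
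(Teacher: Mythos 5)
Your proof is correct and takes essentially the same route as the paper: identify $\Sigma=\partial S(V)$ as a common spacelike surface containing both 3-cells, use interaction locality (consequence \eqref{ILf}) to rewrite each $K$ as $U^0_\Sigma\,P_\Sigma(\cdot)\,U^\Sigma_0$, and conclude from the commutativity of multiplication operators in a single PVM. Your additional remark that $i\neq j$ (since distinct past faces of one abstract 4-cell carry distinct particle labels) is correct but not needed, as the paper's argument and yours both rest only on the fact that two multiplication operators on the same surface commute.
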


\begin{proof}
Since ${}^3C(x_{ik})\subseteq \HHH_{ik} \cap \partial S(V)$, and since $g_{x_{ik-1},{}^3C(x_{ik}),x_{ik}}$ vanishes outside of ${}^3C(x_{ik})$, the consequence \eqref{ILf} of interaction locality implies that multiplication by this $g$ function can as well be carried out on $\partial S(V)$, i.e.,
\begin{subequations}
\begin{align}
K(x_{ik}) \, K(x_{j\ell})
&= U^0_{\HHH_{ik}} \, P_{\HHH_{ik}}\bigl( g_{x_{ik-1},{}^3C(x_{ik}),x_{ik},i} \bigr) \, U^{\HHH_{ik}}_{\HHH_{j\ell}} \, P_{\HHH_{j\ell}}\bigl( g_{x_{j\ell-1},{}^3C(x_{j\ell}),x_{j\ell},j} \bigr) \, U^{\HHH_{j\ell}}_0\\[1mm]
&= U^0_{\partial S(V)} \, P_{\partial S(V)}\bigl( g_{x_{ik-1},{}^3C(x_{ik}),x_{ik},i} \bigr) \, P_{\partial S(V)}\bigl( g_{x_{j\ell-1},{}^3C(x_{j\ell}),x_{j\ell},j} \bigr) \, U^{\partial S(V)}_0\\
&= U^0_{\partial S(V)} \, P_{\partial S(V)}\Bigl( g_{x_{ik-1},{}^3C(x_{ik}),x_{ik},i} \, g_{x_{j\ell-1},{}^3C(x_{j\ell}),x_{j\ell},j} \Bigr) \, U^{\partial S(V)}_0\,.
\end{align}
\end{subequations}
Since multiplication of the two $g$ functions is commutative, $K(x_{j\ell})\, K(x_{ik})$ yields the same expression.
\end{proof}

\begin{prop}\label{prop:L}
Assuming interaction locality, any two admissible sequences lead to the same operator $L(\ux)$.
\end{prop}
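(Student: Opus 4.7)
The plan is to reduce the claim, via Proposition~\ref{prop:deform}, to invariance of $L(\ux)$ under a single elementary deformation, and then to use interaction locality to push all relevant multiplication operators onto one common spacelike hypersurface, where they trivially commute.

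So suppose two admissible sequences differ only by swapping $\uk_n$ and $\uk_{n+1}$ in the ordering of 4-cells. Proposition~\ref{prop:admseq} applied to both orderings forces that neither $\uk_n$ nor $\uk_{n+1}$ is a predecessor of the other; consequently they share no 3-cell face and their sets of past faces are disjoint. Moreover, every predecessor of $\uk_n$ and every predecessor of $\uk_{n+1}$ must already lie in $V_n$, so all past faces of $\uk_n$ and of $\uk_{n+1}$ are 3-cells contained in the single hypersurface $\Sigma_n := \partial S(V_n)$ (which belongs to $\sS$ by the blanket assumption of Section~\ref{sec:assumptions}).

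Write $\mathbb{K}(\uk)$ for the product of the operators $K(x_{ik_i})$ over the past faces $(i,\uk)$ of $\uk$; this product is unambiguous by Proposition~\ref{prop:3Ccommute}. The claim reduces to showing $\mathbb{K}(\uk_n)\mathbb{K}(\uk_{n+1}) = \mathbb{K}(\uk_{n+1})\mathbb{K}(\uk_n)$. Every factor $K(x_{ik})$ arising in either side has the form $U^0_{\HHH_{ik}} P_{\HHH_{ik}}(g) U^{\HHH_{ik}}_0$ with $g$ supported in the 3-cell ${}^3C(x_{ik})\subseteq \HHH_{ik}\cap \Sigma_n$. Inserting $U^0_{\Sigma_n} U^{\Sigma_n}_{\HHH_{ik}}$ and its inverse and applying the consequence \eqref{ILf} of interaction locality to the pair of surfaces $\HHH_{ik}$ and $\Sigma_n$, each such factor rewrites as $U^0_{\Sigma_n} P_{\Sigma_n}(g) U^{\Sigma_n}_0$. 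After this rewriting, both orderings of $\mathbb{K}(\uk_n)$ and $\mathbb{K}(\uk_{n+1})$ take the form $U^0_{\Sigma_n}\,\bigl(\text{product of }P_{\Sigma_n}\text{-multiplications}\bigr)\,U^{\Sigma_n}_0$; since multiplications drawn from a single PVM always commute, regardless of the particle variable they act in, the two products coincide.

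The main conceptual step is the observation that admissibility of both orderings forces enough separation between $\uk_n$ and $\uk_{n+1}$ to locate all of their past faces simultaneously on a single hypersurface $\Sigma_n$; once that is achieved, interaction locality reduces the noncommuting, hyperboloid-indexed factors to commuting multiplication operators on $\Sigma_n$, and the remainder is immediate.
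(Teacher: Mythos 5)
Your proof is correct and follows essentially the same route as the paper's: reduce to an elementary deformation via Proposition~\ref{prop:deform}, observe via Proposition~\ref{prop:admseq} that neither $\uk_n$ nor $\uk_{n+1}$ is a predecessor of the other so all their past faces lie on the common spacelike surface $\Sigma_n=\partial S(V_n)$, and invoke interaction locality to turn the relevant $K$-factors into commuting multiplication operators on $\Sigma_n$; the paper compresses the last step into a reference to the reasoning of Proposition~\ref{prop:3Ccommute}, which you usefully spell out via \eqref{ILf}. One small imprecision: $\mathbb{K}(\uk_m)$ should collect the operators $K(x_{ik})$ for those flashes whose \emph{own} 3-cell ${}^3C(x_{ik})$ is a past face of $\uk_m$ (i.e., for which ${}^3C(x_{ik})={}^3C_{i\uk_m}$), not the operators $K(x_{ik_{m,i}})$ for every past face $(i,\uk_m)$, since for a given $i$ the flash $x_{ik_{m,i}}$ need not lie in the particular 3-cell ${}^3C_{i\uk_m}$; this does not affect your argument, because the operators moved by the swap are exactly those whose 3-cells lie in $\partial_-{}^4C_{\uk_n}\cup\partial_-{}^4C_{\uk_{n+1}}\subseteq\Sigma_n$, which is all you use.
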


\begin{proof}
By Proposition~\ref{prop:deform}, it suffices to consider two admissible sequences that differ by an elementary deformation as in \eqref{deform}. By Proposition~\ref{prop:admseq}, the two 4-cells $\uk_n,\uk_{n+1}$ that get exchanged must be such that neither is a predecessor of the other; that is, they do not have a 3-cell in common. Hence, for each of them the past boundary is a subset of $\partial S(V_n)$ with $V_n=\{\uk_1, \ldots , \uk_{n-1}\}$. For the same reasons as in the proof of Proposition~\ref{prop:3Ccommute}, the $K$ operators for any two flashes in 3-cells in the past boundaries of ${}^4C_{\uk_n}$ and ${}^4C_{\uk_{n+1}}$ commute (they are multiplication operators on a common spacelike surface). Thus, the different operator orderings associated with the two admissible sequences yield the same $L(\ux)$.
\end{proof}

This completes the definition of $L(\ux)$ and thus of $D(\ux)$ as in \eqref{Ddef} and of the distribution of $\uX$ as in \eqref{PPPdef}. It remains to verify that $\PPP$ is a probability distribution.

\subsubsection{Normalization}

\begin{prop}\label{prop:normalized}
Interaction locality implies that
\be\label{normalized}
\int\limits_{\MMM^\nu} \hspace{-1mm} d\ux\, D(\ux) = I\,.
\ee
As a consequence, $G(\cdot)$ is a POVM, and \eqref{PPPdef} defines a probability distribution for every $\psi_0\in\Hilbert_0$ with $\|\psi_0\|=1$.
\end{prop}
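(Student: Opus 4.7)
The plan is to generalize the proof of Proposition \ref{prop:IN=2} by processing the admissible sequence inductively from the inside out. As in that proof, the first step is to use the coarea formula \eqref{coarea} for each of the $\nu$ flashes together with the identity \eqref{expint} to absorb the exponential factors, reducing the claim to showing that for any fixed family of hyperboloids $\HHH_{ik}$,
\[ \int \prod_{i,k} d^3 x_{ik} \; L(\ux)^\dagger L(\ux) = I, \]
each $x_{ik}$ ranging over $\HHH_{ik}$. Fix an admissible sequence $(V_1,\ldots,V_{r+1})$ by Proposition \ref{prop:existsadmseq}; by Proposition \ref{prop:L}, $L(\ux)$ is independent of this choice. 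Since $g_{x_{ik-1},\,{}^3C(x_{ik}),x_{ik},i}$ vanishes outside of ${}^3C(x_{ik})$, the integration over each $\HHH_{ik}$ splits as a finite sum of integrations over its 3-cells, and for each joint 3-cell assignment the admissible sequence dictates a definite ordering of the factors $K(x_{ik})$ inside $L$.

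I would then process the steps of the admissible sequence from $n=r$ downward, integrating at each stage the $K^\dagger K$ pairs that currently sit innermost in $L^\dagger L$, namely those associated with flashes whose 3-cells are past faces of the 4-cell ${}^4C_{\uk_n}$ just above $\Sigma_n := \partial S(V_n)$. Each inductive step uses three ingredients. First, \eqref{Gnormalized} collapses the integral of a pair $P_{\HHH_{ik}}(g)^2$ over its 3-cell $C$ into the projection onto the $i$-th coordinate lying in $C$. Second, by Proposition \ref{prop:pastcomplete} the surface $\Sigma_n$ is spacelike-or-lightlike, and the consequences \eqref{ILf} and \eqref{ILB} of interaction locality let us conjugate both these new projections and the surviving multiplication operators from outer steps (whose supports lie in $\Sigma_n$) off their native hyperboloids and onto $\Sigma_n$ itself. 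Third, once several operators live on the common surface $\Sigma_n$ they commute as PVM operators and combine via $P(A)P(B)=P(A\cap B)$, exactly as in \eqref{simplify1commute}. When the step has been fully integrated and its 3-cells summed, the contribution absorbs into a projection onto a rectangular subset of $\Sigma_{n-1}^N$, and the induction proceeds. At the base, the cocycle identities $U^{\Sigma'}_\Sigma U^{\Sigma''}_{\Sigma'} = U^{\Sigma''}_\Sigma$ telescope the unitaries and the final projection becomes $P(\Sigma^N)=I$, just as in the assembly of \eqref{P1P2}--\eqref{F1F2} for $N=2$.

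The main obstacle is combinatorial bookkeeping: one must verify that after each inductive step the accumulated projection, summed over the 3-cell assignments already integrated out, factorizes on $\Sigma_n$ into a rectangular product $P_{\Sigma_n}(A_1\times\cdots\times A_N)$ across particle slots, since only such rectangular projections interact cleanly with \eqref{ILf} when moving the next layer of $P(g)$ operators across. Rectangularity is guaranteed by the fact that interaction locality applies to each particle slot separately, together with the partition of $\Sigma_n$ into 3-cells of the different hyperboloids for each particle; the sum over 3-cells for slot $i$ covers all of $\Sigma_n$ (analogous to $P_i \cup F_i = \HHH_i$ in the simple case). A minor technicality is that several 3-cells may be crossed at the same step, but their $K$ operators commute by Proposition \ref{prop:3Ccommute}, so they can be integrated in either order.
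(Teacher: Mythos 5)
Your plan tracks the paper's proof closely in its overall strategy: reduce via the coarea formula \eqref{coarea} and \eqref{expint} to fixed hyperboloids $\HHH_{ik}$, split the integration into sums over 3-cell assignments, and induct backward along an admissible sequence, integrating out innermost $K^\dagger K$ pairs via \eqref{Gnormalized} and using interaction locality to bring operators onto a common spacelike surface so that they commute and combine. But there are two genuine gaps in your sketch. First, you assert that the surviving projections from outer steps have ``supports lying in $\Sigma_n$,'' but this is not automatic: after an earlier step such a projection sits on a 3-cell of the \emph{future} boundary $\partial_+{}^4C$ of the 4-cell about to be crossed, which belongs to the earlier surface but not to the surface you are descending onto. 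You need an explicit pull-across, $P^{j\ell}_{\partial_+{}^4C}= P^{j\ell}_{\partial_-{}^4C}$, which is a consequence of \eqref{ILB}; and for this to be usable within the sum over assignments you must first verify an all-or-none property (the paper's \eqref{allornone}): for each $ik$, either the whole of $\partial_+{}^4C$ lies in $\Out_{nik}$ or none of it does, so that the relevant summands can be regrouped into a projection onto the entire future boundary before \eqref{pullacross} is applied to each particle slot. Without naming this step, your ingredient (iii) has nothing commuting on a common surface to work with.

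Second, you never address the Fubini constraint that the domain of $x_{ik}$ depends on $x_{ik-1}$, so the $x_{ik}$-integral must sit inside the $x_{ik-1}$-integral and you cannot freely move an $x_{j\ell}$-integral to the rightmost position in order to integrate it first. The paper's resolution is structural: the only variables whose domains depend on $x_{j\ell}$ are $x_{j,\ell+1},\ldots$, and these lie in the future, hence on $\Sigma_n$ or later, hence have already been integrated out at the present stage of the backward induction (which is precisely the content the $\In_{nik}/\Out_{nik}$ index sets are built to enforce). Your sketch tacitly assumes this reordering is available but does not establish it, and it is an essential check without which the step ``integrate out the innermost $K^\dagger K$ pair'' is simply not licensed. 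As a minor point, the ``rectangularity'' you identify as the main obstacle is true of each individual summand but is not where the real difficulty lies; the paper does not collapse the sum into a single rectangular projection but rather carries the full labeled sum over $\theta\in M_n$ through the induction, and the entire combinatorial weight is borne by the $\In/\Out$ bookkeeping together with the two points above.
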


\begin{proof}
Written out, \eqref{normalized} reads
\be
\frac{1}{\tau^\nu}\Biggl(\prod_{ik}\int\limits_{\future(x_{ik-1})} \hspace{-6mm} d^4x_{ik}\Biggr) \Biggl( \prod_{ik} e^{-|x_{ik}-x_{ik-1}|/\tau} \Biggr)  L(\ux)^\dagger L(\ux)=I
\ee
with the abuse of notation that $\prod_{ik} \int d^4x_{ik}$ means, not a product, but repeated integration over all $x_{ik}$, with each integral extending up to the equal sign.

By the coarea formula \eqref{coarea} and the normalization \eqref{expint}, it suffices to show that for all $s_{ik}>0$,
\be\label{toshow2}
\Biggl(\prod_{ik}\int\limits_{\HHH_{x_{ik-1}}(s_{ik})} \hspace{-6mm} d^3x_{ik} \Biggr)  L(\ux)^\dagger L(\ux)=I\,.
\ee
While Fubini's theorem allows us to exchange the order of integration, it must be noted here that the domain for $x_{ik}$ depends on $x_{ik-1}$, so the $x_{ik}$-integral must occur to the right of the $x_{ik-1}$-integral. This limitation on the possible ordering of the integrals must be kept in mind; note also that the order of integrals is not a priori related to the order of factors in $L(\ux)$.

Fix the $s_{ik}$ and let $\HHH_{ik}:= \HHH_{x_{ik-1}}(s_{ik})$.
We split the multiple integral, corresponding to the partition of each $\HHH_{ik}$ into the ${}^3 C_{i\uk}$'s with $k_i=k$, into a sum
\be\label{summands}
\sum_{\substack{\uk^{11}\ldots \uk^{Nn_N}\\k^{ik}_i = k \, \forall ik}} \Biggl(\prod_{ik}\int\limits_{{}^3C_{i\uk^{ik}}} \hspace{-3mm} d^3x_{ik} \Biggr) L(\ux)^\dagger L(\ux) \,.
\ee
Each summand is associated with a certain element of ${}^4\sA^\nu$, and different summands with different elements of ${}^4\sA^\nu$. 

As a preparation for the general procedure, let us outline the first step of the induction. In each summand, consider the two innermost $K$ factors of $L(\ux)^\dagger L(\ux)$, let them be $K(x_{j\ell})^\dagger K(x_{j\ell})$. We want to integrate them out using \eqref{Gnormalized}, resulting in a factor $P^{j\ell}_{{}^3C_{j\uk^{j\ell}}}$ in the abbreviated notation
\be\label{P3cell}
P^{j\ell}_{{}^3C_{i\uk}}:= U^0_{\HHH_{ik_i}} \, P_{\HHH_{ik_i}}\bigl(1_{x_{j\ell}\in {}^3 C_{i\uk}}\bigr) \, U^{\HHH_{ik_i}}_0
\ee
(which depends only on the 3-cell rather than on $\HHH_{ik_i}$ by interaction locality).
But let us be slow and integrate out, at this step, $x_{j\ell}$ \emph{only} if ${}^3 C_{j\uk^{j\ell}}$ lies on $\partial {}^4C_{\un}$ (the futuremost surface formed by 3-cells). Since $k^{j\ell}_j=\ell$, and $\HHH_{j\ell}$ must border on ${}^4 C_{\un}$, $\ell=n_j$; thus, there is no $\HHH_{j\ell+1}$, and therefore no obstacle to changing the order of integration so that the rightmost integral is over $x_{j\ell}$. That is, such an $x_{j\ell}$ can, in fact, be integrated out. Since factors corresponding to different 3-cells on $\partial {}^4C_{\un}$ commute, we can integrate them all out. As a result, in each summand, there is no integration any more over any 3-cell on $\partial {}^4C_{\un}$, but for each 3-cell on $\partial {}^4C_{\un}$ involved in a summand, there is a factor of the form \eqref{P3cell}. The induction step will be about considering surfaces made up of 3-cells that lie further and further in the past, until we are done with the pastmost surface $\partial {}^4C_{0^N}$ and all variables are integrated out. Now we give the details. 

Fix an admissible sequence $(V_1,\ldots,V_{r+1})$. We will consider the sequence backwards and count down the index $n$ of $V_n$ from $r+1$ to 1. We write $V_n^c:= {}^4\sA \setminus V_n$ for the complement of $V_n$; since the corresponding space-time set $S(V_n^c)=S(V_n)^c$ is future complete, it is for every $n\neq r+1$ the future of some spacelike surface
\be 
\Sigma_n:= \partial S(V_n^c)= \partial S(V_n) = S(\partial V_n^c) = S(\partial V_n)\,.
\ee
At each stage of the process, each summand stemming from \eqref{summands} is related to $\nu$ 3-cells. After integrating out one variable, we call the associated 3-cell an \emph{out-cell}, while a 3-cell associated with a variable that has not yet been integrated out will be called an \emph{in-cell}. In each step $n\to n-1$ of the induction, we will operate on the summands keeping their sum the same. Let $\sB$ denote the set of abstract flashes, i.e., of all pairs $(i,k)$:
\be\label{Bdef}
\sB:=\bigl\{(i,k)\in \{1\ldots N\} \times \NNN: 1\leq k \leq n_i\bigr\}\,.
\ee
The abstract 3-cells on $\HHH_{ik}$ form the set
\be
{}^3\sA_{ik} = \bigl\{ (i,\uk) \in {}^3\sA: k_i=k \bigr\}\,.
\ee
Moreover, we define the sets that will turn out to be the sets of all in-cells (out-cells, respectively) by
\begin{subequations}\label{inoutdef}
\begin{align}
\In_{nik}&:= \Bigl\{ (i,\uk) \in {}^3\sA_{ik} : \text{earlier than }\partial V_n \Bigr\}\,,\label{indef}\\
\Out_{nik}&:= \Bigl\{ (j,\uell) \in \partial V_n : \text{no later than } {}^3\sA_{ik}\Bigl\} \,.\label{outdef}
\end{align}
\end{subequations}
They correspond to the space-time sets
\be
S(\In_{nik}) = \HHH_{ik} \cap (\past(\Sigma_n)\setminus \Sigma_n)\,,~~~
S(\Out_{nik}) = \Sigma_n \cap \past(\HHH_{ik})\,.
\ee
Together, they form the spacelike surface $\partial \bigl( \future(\HHH_{ik}) \cup \future(\Sigma_n) \bigr)$; see Figure~\ref{fig:inout}.

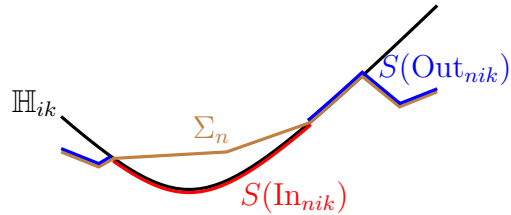
\begin{figure}[h]
\begin{center}
\begin{tikzpicture}
\draw [very thick, domain=-2.5:2.5, samples=50] plot (\x, {-0.5+sqrt(1+(\x+0.8)*(\x+0.8)))} );
\node at (-2.85,1.65) {$\HHH_{ik}$};
\draw[very thick, brown] (-2.5,1) -- (-2,0.8) -- (-1.8,0.914) -- (-0.3,1) -- plot[domain=0.8:1.5, samples=30] (\x, {-0.5+sqrt(1+(\x+0.8)*(\x+0.8))} ) -- (2,1.6) -- (2.5,1.8);
\node[brown] at (-0.5,1.3) {$\Sigma_n$};
\draw[very thick, blue] (-2.5,1.045) -- (-2,0.845) -- (-1.84,0.94);
\draw[very thick, blue] plot[domain=0.78:1.503, samples=30] (\x, {-0.45+sqrt(1+(\x+0.8)*(\x+0.8))} ) -- (2.005,1.645) -- (2.5,1.84);
\node[blue] at (2.6,2.1) {$S(\Out_{nik})$};
\draw[very thick, red, domain=-1.81:0.82, samples=30] plot (\x, {-0.541+sqrt(1+(\x+0.8)*(\x+0.8))} );
\node[red] at (0.6,0.4) {$S(\In_{nik})$};
\end{tikzpicture}
\end{center}
\caption{The sets In and Out for a particular hyperboloid $\HHH_{ik}$ and $\Sigma_n$. Some pieces of hyperbola are drawn as straight lines. Some lines are drawn next to each other (rather than on top of each other) for better visibility.}
\label{fig:inout}
\end{figure}

\medskip

\noindent{\bf Induction hypothesis:} {\it The summands are labeled by the elements of
\be\label{Mndef}
M_n := \Bigl\{ \theta:\sB\to {}^3\sA~:~ \forall ik\in \sB: \theta(ik) \in \In_{nik} \cup  \Out_{nik} \Bigr\}=\prod_{ik} (\In_{nik} \cup  \Out_{nik})
\ee
with $\prod$ the Cartesian product, and the summand labeled $\theta$ reads
\be\label{summandtheta}
\Biggl(\prod_{\substack{ik\in \sB\\ \theta(ik)\in \In_{nik}}} \int\limits_{{}^3 C_{\theta(ik)}} \hspace{-3mm} d^3x_{ik} \Biggr) 
\Biggl( \prod_{\substack{ik\in \sB\\ \theta(ik)\in \In_{nik}}} \hspace{-3mm} K(x_{ik})\Biggr)^{\!\!\dagger}
\Biggl(\prod_{\substack{ik\in \sB\\ \theta(ik)\in \Out_{nik}}} \hspace{-3mm} P^{ik}_{{}^3C_{\theta(ik)}}\Biggr)
\Biggl( \prod_{\substack{ik\in \sB\\ \theta(ik)\in \In_{nik}}} \hspace{-3mm} K(x_{ik}) \Biggr)\,,
\ee
where the product over $K(x_{ik})$ is understood in the order from right to left in which the 3-cells are crossed in the admissible sequence.}

\medskip

The form \eqref{summandtheta} of the summand labeled $\theta$ means, in particular, that the in-cells are integrated over, and the out-cells appear only in the projections in the middle. The order of the factors $P^{ik}_{{}^3C_{\theta(ik)}}$ need not be specified: they commute pairwise because all of the 3-cells ${}^3C_{\theta(ik)}$ lie on a common spacelike surface $\Sigma_n$.

The anchor of the induction is the case $n=r$, in which $\In_{rik}= \{(i,\uk)\in {}^3\sA: k_i=k\}$ and $\Out_{rik}=\emptyset$, so $M_r$ as in \eqref{Mndef} corresponds to those $(\uk^{11}\ldots \uk^{Nn_N})$ with $k^{ik}_i=k$, and the summands agree with those of \eqref{summands}.

On the other end, for $n=2$, we find that $V_2=\{0^N\}$, $\Sigma_2=\partial {}^4C_{0^N}$, $\In_{2ik}=\emptyset$, and $\Out_{2ik}$ contains exactly the 3-cells on $\Sigma_2$. So the induction hypothesis, when proved, will imply that no summands involve integrals any more, and the sum reads
\be
\sum_{\uk^{11}\ldots \uk^{Nn_N} \in \partial \{0^N\}}
\prod_{ik\in\sB}  P^{ik}_{{}^3C_{\uk^{ik}}}
= \prod_{ik\in \sB} \underbrace{\Biggl( \sum_{\uk \in \partial \{0^N\}} P^{ik}_{{}^3C_{\uk}}\Biggr)}_{=I} =I\,,
\ee
as needed for \eqref{toshow2}.

So it remains to carry out the {\bf induction step} $n\to n-1$, which consists of two parts. The first part deals with the projections in the middle of \eqref{summandtheta}, the second with integrating out some of the variables.

{\it First part:} Interaction locality in the form \eqref{ILB} implies that the projection to the future boundary of a 4-cell can be ``pulled across'' the 4-cell, i.e., is equal to the projection to its past boundary, 
\be\label{pullacross}
P^{j\ell}_{\partial_+ {}^4C}= P^{j\ell}_{\partial_- {}^4C}\,,
\ee
where $\partial_\pm$ denotes the future (past) boundary, which consists of one or more 3-cells, and $P^{j\ell}_{\partial_{\pm} {}^4C}$ equals the sum of the $P^{j\ell}_{{}^3C}$ over all 3-cells ${}^3C$ belonging to $\partial_{\pm} {}^4C$. The relevant 4-cell ${}^4C$ here is the one crossed by the admissible sequence between $n-1$ and $n$, ${}^4C={}^4C_{\uk}$ with $V_n = V_{n-1} \cup \{\uk\}$. The future boundary of ${}^4C$ consists of 3-cells belonging to $\Sigma_n$, the past boundary of 3-cells belonging to $\Sigma_{n-1}$; in fact, the only difference between $\Sigma_n$ and $\Sigma_{n-1}$ is that the 3-cells belonging to the future boundary of ${}^4C$ are replaced by those belonging to the past boundary of ${}^4C$. 

For every $ik\in \sB$, 
\be\label{allornone}
\text{either all or none of the 3-cells in $\partial_+{}^4C$ belong to $\Out_{nik}$.} 
\ee
Indeed, either ${}^4C\subseteq \future(\HHH_{ik})$ or ${}^4C\subseteq \past(\HHH_{ik})$. In the former case, $\partial_+{}^4C \subseteq \future(\HHH_{ik})\setminus \HHH_{ik}$; since $\Out_{nik}$ lies in the past of $\HHH_{ik}$, none of the 3-cells in $\partial_+{}^4C$ belong to $\Out_{nik}$. In the latter case, $\partial_+{}^4C\subseteq \past(\HHH_{ik})$; since all of the 3-cells in $\partial_+ {}^4C$ belong to $\Sigma_n$, they all belong to $\Out_{nik}$, which proves \eqref{allornone}.

Now define
\be
\widetilde{\Out}_{nik}:=
\begin{cases}
(\Out_{nik}\setminus \partial_+{}^4C)\cup \partial_-{}^4C &\text{if } \partial_+{}^4C \subseteq \Out_{nik}\\
\Out_{nik}&\text{otherwise}
\end{cases}
\ee
and $\widetilde M_n$ like $M_n$ in \eqref{Mndef} but with $\Out_{nik}$ replaced by $\widetilde\Out_{nik}$.
\be\label{claim}
\mbox{\begin{minipage}{0.85\textwidth}
{\it Claim}: The sum over $\theta\in M_n$ of \eqref{summandtheta} equals the sum over $\theta\in \widetilde M_n$ of \eqref{summandtheta} with $\Out_{nik}$ replaced by $\widetilde\Out_{nik}$.
\end{minipage}}
\ee
To see this, think of 
$M_n$ as the rightmost expression of \eqref{Mndef}. We take the following step successively for each $j\ell\in \sB$ (in any ordering of $\sB$): We replace $\Out_{nj\ell}$ in \eqref{Mndef} and \eqref{summandtheta} by $\widetilde\Out_{nj\ell}$; that is, a $P$ factor appears in each summand for each $ik$ for which $\theta(ik)\in\Out_{nik}$ respectively $\theta(ik)\in\widetilde\Out_{nik}$, depending on whether the replacement step has already been done for $ik$. We check that each step leaves the sum unchanged; in fact, for every fixed choice of $\theta(ik)$ for all $ik\neq j\ell$, the sum of the summand remains unchanged. Indeed, this sum is a sum over all $\theta(j\ell)\in \In_{nj\ell}\cup \Out_{nj\ell}$. The summands with $\theta(j\ell)\in \In_{nj\ell}\cup \Out_{nj\ell}\setminus \partial_+{}^4C$ do not change. By \eqref{allornone}, the summands with $\theta(j\ell)\in \Out_{nj\ell}\cap\partial_+{}^4C$ together are either 0 or can be combined into one expression of the form \eqref{summandtheta} with $P^{j\ell}_{{}^3C_{\theta(j\ell)}}$ replaced by $P^{j\ell}_{\partial_+{}^4C}$. By \eqref{pullacross}, $\partial_+$ can be replaced by $\partial_-$, and by the same reasoning backwards, this equals the sum over $\theta(j\ell)\in \widetilde\Out_{nj\ell}\cap\partial_-{}^4C$. Thus, each step leaves the sum unchanged, and after all steps (for all $j\ell$), we have proved the  claim \eqref{claim}.

At this point, we have achieved in particular that all $P$ factors refer to 3-cells on $\Sigma_{n-1}$.

{\it Second part:} We now wish to integrate out all variables that vary over 3-cells in $\Sigma_{n-1}$. We can do this for each summand individually, so focus on a particular $\tilde\theta\in \widetilde M_n$. The only 3-cells in $\Sigma_{n-1}$ that were not included already in $\Sigma_n$ are those in $\partial_-{}^4C$, and the only ones that any variable $x_{ik}$ ever gets integrated over are those in $\In_{nik}$. In the given summand $\tilde\theta$, there can be none or one or several variables $x_{ik}$ for which $\partial_-{}^4C$ overlaps with $\In_{nik}$. If none, we leave the summand unchanged. If one or more, we will treat them successively in an arbitrary order. So let $x_{j\ell}$ be one of them. The leftmost factors in the $\prod K(x_{ik})$ in \eqref{summandtheta} are those referring to 3-cells in $\partial_-{}^4C$; by Proposition~\ref{prop:3Ccommute}, these factors commute with each other, so we can assume that $K(x_{j\ell})$ is the leftmost one.

Now we want to make sure that the integral over $x_{j\ell}$ is the rightmost integral. We can change the order of integration using Fubini's theorem, provided the domains of integration of the other integration variables do not depend on $x_{j\ell}$. The variables whose domain depends on $x_{j\ell}$ are $x_{j\ell+1}$ and higher ones for particle $j$. Since these domains all lie on $\HHH_{j\ell+1}$ or later, and thus in the future of $x_{j\ell}$, they lie on $\Sigma_n$ or later, so by \eqref{outdef} and the induction hypothesis, all of these variables have already been integrated out in previous induction steps ($\In_{nj\ell+1}=\emptyset$), and we can assume that the $x_{j\ell}$ integral is the rightmost integral.

For carrying out the integral, we need that the space-time locations of the 3-cells ${}^3C_{\tilde\theta(ik)}$ in the factors $P^{ik}_{{}^3C_{\tilde\theta(ik)}}$ do not depend on $x_{j\ell}$. This follows if none of these 3-cells lies in the strict (open) future of $x_{j\ell}$. Now all of these 3-cells lie on $\Sigma_{n-1}$, a spacelike hypersurface containing $x_{j\ell}$, and thus not in the strict future of $x_{j\ell}$. 

We also need that $K(x_{j\ell})$ commutes with the $P$'s in the middle. That is the case because the $P$'s are multiplication operators on their 3-cells and thus (by interaction locality) on $\Sigma_{n-1}$; likewise, $K(x_{j\ell})$ is by its definition \eqref{Kdef} a multiplication operator on the 3-cell ${}^3C(x_{j\ell})$ containing $x_{j\ell}$ (which remains the same 3-cell ${}^3C:= {}^3C_{\tilde\theta(j\ell)}$ during the integration over $x_{j\ell}$) and thus (by interaction locality) a multiplication operator on $\Sigma_{n-1}$. Since all multiplication operators on a common spacelike surface commute, we can pull $K(x_{j\ell})$ to the left of all $P$'s, where it arrives next to $K(x_{j\ell})^\dagger$. Since none of the other factors ($P$'s and $K$'s) in the integrand depends on $x_{j\ell}$, they can be pulled out of the $x_{j\ell}$ integral. By \eqref{Gnormalized}, the integral can be carried out to yield
\be
\int_{{}^3C} \!\!\! d^3 x_{j\ell} \, K(x_{jk})^\dagger K(x_{jk}) = P^{j\ell}_{{}^3C}\,.
\ee
This factor joins the $P$ factors, showing up in the correct position among all factors in the remaining integrand \eqref{summandtheta}. In particular, still all $P$ factors refer to 3-cells \emph{on $\Sigma_{n-1}$}. We repeat this operation of carrying out the integral for all integrals over 3-cells on $\partial_-{}^4C$. Afterwards, in this summand $\tilde\theta$ the out-cells (with $P$ factors) are those in $\widetilde\Out_{nik}$ together with those in $\partial_-{}^4C$, and thus exactly those in $\Out_{n-1,ik}$; the in-cells (with $K$ factors) are those in $\In_{nik}$ except for those on $\Sigma_{n-1}$ or later, and thus exactly those in $\In_{n-1,ik}$. The summand has the form \eqref{summandtheta} with $n$ replaced by $n-1$, and the index $\theta$ labeling the summands runs through $M_{n-1}$. We have thus proved the induction hypothesis for $n-1$, completed the induction step, and completed the proof of Proposition~\ref{prop:normalized}.
\end{proof}

\subsubsection{Definition of the Theory}

We have defined the model in \eqref{PPPdef} for chosen numbers $n_i$ of flashes for each particle $i$. If we want to think of this model as a theory of the universe, and compare it to our empirical observations, we should take the limit $n_i\to \infty$ or choose $n_i$ very large. 

In contrast to the non-interacting 2004 model, in the present model the marginal distribution of the first $\tilde n_i$ flashes for each particle $i$ (i.e., the distribution after integrating out the flashes after $\tilde n_i$) is \emph{not} given by the same formula \eqref{PPPdef}, although it is still given by \emph{some} POVM. That is because the partition of the hyperboloids into 3-cells depends on the later flashes, and thus so does the procedure of cutting off the tails of the Gaussians. As a consequence, for the 2004 model we did not actually have to specify the numbers $n_i$, but now we have to; any choice of very large $n_i$ should yield reasonable behavior of the theory, as well as the limit $n_i\to\infty$.

\section{Properties of the Model}
\label{sec:properties}

\begin{enumerate}
\item {\it Size of 3-cells.} Since the tails of the Gaussian profile function get cut off at the boundary of a 3-cell $A$, the width of the resulting profile function $g_{yAx}$ could be smaller than $\sigma$ if the diameter of $A$ is, which could have undesirable consequences such as amplified empirical deviations of the model from standard quantum mechanics. I have made a crude estimate of the typical diameter of the 3-cells for condensed matter under everyday conditions and arrived at several millimeters or larger, which is much larger than GRW's suggested value of $\sigma = 10^{-7}$ m and thus suggests that the deviations are not amplified. Put differently, the tails are typically cut off at about $10^4$ standard deviations, so the change is tiny. A more careful study of this question would be of interest.

Matthias Lienert has made the interesting suggestion (personal communication) that since the Gaussians get cut off anyway, maybe they can be dispensed with altogether and replaced by a constant function (corresponding to the limit $\sigma \to\infty$); at each collapse, the wave function would then be localized to the size of a 3-cell. An investigation of whether such a theory is viable would be of interest.

\item {\it Stochastic evolution of the wave function.} In order to define a theory with flash ontology, it suffices to define the joint distribution of the flashes. But it is common to think of collapse model in terms of a stochastically evolving wave function. Such a wave function $\psi_\Sigma$ can be defined for the present model for every spacelike surface $\Sigma$ as follows. It should be related to
the conditional probability distribution of $\uX$, given the flashes up to  $\Sigma$. To express this distribution, let $\sI\subseteq \sB$ be an arbitrary index set of $ik$'s (with $\sB$ as in \eqref{Bdef} the set of all $ik$'s), let $\sI^c:= \sB\setminus \sI$, and let $\uX_{\sI}$ be the collection of $X_{ik}$ with $ik\in\sI$; likewise $\ux_{\sI}$ etc., so we can write $\ux=(\ux_{\sI},\ux_{\sI^c})$. Then the conditional distribution of the flashes after $\Sigma$, given that those before $\Sigma$ were at $\ux_{\sI}\in\past(\Sigma)^{\sI}$, is
\begin{multline}\label{condprob}
\PPP\biggl(\uX_{\sI^c}\in d\ux_{\sI^c} \bigg| \uX_{\sI^c}\in \future(\Sigma)^{\sI^c} \text{ and } \uX_{\sI}=\ux_{\sI} \biggr) 
=\\ 
1_{\ux_{\sI^c}\in\future(\Sigma)^{\sI^c}} \: \frac{\scp{\psi_0}{D(\ux)|\psi_0}}{\scp{\psi_0}{W_\Sigma(\ux_{\sI})^2|\psi_0}} \: d\ux_{\sI^c}
\end{multline}
with positive operators
\be\label{Wdef}
W_\Sigma(\ux_{\sI})= \biggl( \frac{G(d\ux_{\sI}\times \future(\Sigma)^{\sI^c})}{d\ux_{\sI}}\biggr)^{1/2} = \Biggl(\:\int\limits_{\future(\Sigma)^{\sI^c}} \hspace{-6mm} d\ux_{\sI^c} \,  D(\ux)\Biggr)^{1/2}\,.
\ee
(The condition that $X_{ik+1}\in \future(X_{ik})$ restricts the relevant index sets $\sI$, but this fact does not change the validity of \eqref{condprob}.) We therefore define, given that the flashes up to $\Sigma$ were $\ux_{\sI}$,
\be\label{psiSigmadef}
\psi_\Sigma:=\frac{U^\Sigma_0 \, W_\Sigma(\ux_{\sI}) \psi_0}{\| W_\Sigma(\ux_{\sI}) \psi_0\|}\,,
\ee
in analogy to Eq.~(25) of \cite{Tum06c}. Considering a fixed pattern $\ux$ of flashes and varying $\Sigma$, this wave function changes abruptly whenever $\Sigma$ crosses one of the flashes (as $\sI$ changes then). The conditional probability \eqref{condprob} can be expressed as
\be\label{condprob2}
\PPP= 
1_{\ux_{\sI^c}\in\future(\Sigma)^{\sI^c}} \: \Bscp{\psi_\Sigma}{U^\Sigma_0 W_\Sigma(\ux_{\sI})^{-1}D(\ux)W_\Sigma(\ux_{\sI})^{-1}U^0_\Sigma\Big|\psi_\Sigma} \: d\ux_{\sI^c}\,.
\ee

\item {\it Non-interacting special case.} If the given unitary hypersurface evolution $U^{\Sigma'}_\Sigma$ is non-interacting, the situation simplifies as different particle variables $x_j$ in the wave function can be evolved to different surfaces, and $K(x_{ik})$ commutes with $K(x_{j\ell})$ for $j\neq i$. If we could replace the cut-off Gaussians $g_{yAx}$ of \eqref{gAdef} in the definition \eqref{Kdef} of the collapse by the original Gaussians $\tilde g_{yx}$ of \eqref{tildegdef}, we would obtain exactly the 2004 model. Thus, whenever it is the case that the 3-cells $A$ are typically much larger than the width $\sigma$ of the Gaussians, then (with high probability) the cutting off does not make a big difference as it concerns only tiny tails of the Gaussian, and the 2004 model is a close approximation to the non-interacting case of the present model.

\item {\it Non-locality.} The collapse model presented here is non-local while being fully relativistic. In fact, it violates Bell's inequality. The non-locality corresponds to the fact that the joint distribution of two flashes is not a product even when the flashes are spacelike separated. Already the 2004 model was non-local, and further aspects of this property were discussed in \cite{Tum06,Tum06c,Tum07,Tum09}.

\item {\it Microscopic parameter independence.} This is the property of a theory that the probability distribution of the local beables before any spacelike surface $\Sigma$ does not depend on the external fields after $\Sigma$. For example, microscopic parameter independence is grossly violated in Bohmian mechanics (for $\Sigma$ not belonging to the preferred foliation). The model presented here does not satisfy microscopic parameter independence exactly, but it does up to small deviations. 

This is suggested by the following considerations. First, $U_0^{\Sigma'}$ does not depend on the external fields after $\Sigma$ if both $\Sigma_0$ and $\Sigma'$ lie in the past of $\Sigma$; by interaction locality, a collapse operator $K(x_{ik})$ does not depend on the external fields after $\Sigma$ if ${}^3C(x_{ik})$ lies in the past of $\Sigma$. The space-time location of ${}^3C(x_{ik})$ (specifically, where its boundaries are) depends on other $x_{j\ell}$, but only on those before $\Sigma$. As a by-product of the proof of Proposition~\ref{prop:normalized}, the marginal distribution of the flashes in the past of $\partial V_n$ for $V_n\in\sN$ is given by the sum over $\theta\in M_n$ of the integrands in \eqref{summandtheta}, so if $S(\partial V_n)$ lies in the past of $\Sigma$, this distribution will not depend on external fields after $\Sigma$. However, even if $x_{ik}$ lies in the past of $\Sigma$, ${}^3C(x_{ik})$ need not lie in the past of $\Sigma$. Yet, it seems that the significant support of $g_{x_{ik-1},{}^3C(x_{ik}),x_{ik}}$ reaches no further than about $\sigma/c\approx 10^{-15}$ s into the future of $\Sigma$.

\item {\it No signaling.} This property means the impossibility for agents  to transmit messages faster than light; it should follow from microscopic parameter independence, as the message to be sent could be modeled as an external field and the message received would have to be some (coarse-grained) function of the local beables.

\item {\it Non-relativistic limit.} In the non-relativistic limit, the present model reduces to the non-relativistic GRW model, provided that the unitary evolution reduces to a non-relativistic unitary evolution. To see this, note that in the limit the hyperboloids become horizontal 3-planes, while the intersection between two hyperboloids escapes to infinity, so that every 3-cell becomes a full horizontal 3-plane and every 4-cell a layer between two such planes. Thus, cutting off the Gaussians becomes irrelevant, there is only one admissible sequence, $K$ is just the Heisenberg-evolved multiplication by a Gaussian, and it becomes visible that the joint distribution of the flashes approaches that of the non-relativistic GRW model.
\end{enumerate}


\begin{thebibliography}{19}

\bibitem{AA81} Y. Aharonov and D.Z. Albert:
	Can we make sense out of the measurement process in 
	relativistic quantum mechanics? 
	{\it Physical Review D} {\bf 24}: 359--371 (1981)

\bibitem{bmgrw} V. Allori, S. Goldstein, R. Tumulka, and N. Zangh\`\i: 
	On the Common Structure of Bohmian Mechanics and the 
	Ghirardi-Rimini-Weber Theory.
	\textit{British Journal for the Philosophy of Science} \textbf{59}: 353--389 (2008)
	\url{http://arxiv.org/abs/quant-ph/0603027}

\bibitem{Bed10} D. Bedingham:
	Relativistic state reduction dynamics.
	{\it Foundations of Physics} {\bf 41}: 686--704 (2011)
	\url{http://arxiv.org/abs/1003.2774}

\bibitem{Bed11} D. Bedingham:
	Relativistic state reduction model.
	{\it Journal of Physics: Conference Series} {\bf 306}: 012034 (2011)
	\url{http://arxiv.org/abs/1103.3974}

\bibitem{BP19} D. Bedingham and P. Pearle:
	On the CSL Scalar Field Relativistic Collapse Model.
	\url{http://arxiv.org/abs/1906.11510}

\bibitem{BDGGTZ14} D. Bedingham, D. D\"urr, G.C. Ghirardi, S. Goldstein, R. Tumulka, and N. Zangh\`\i:
	Matter Density and Relativistic Models of Wave Function Collapse.	
	\textit{Journal of Statistical Physics} \textbf{154}: 623--631 (2014)
	\url{http://arxiv.org/abs/1111.1425}

\bibitem{Bell87} J.S. Bell: 
	Are there quantum jumps? 
	In \textit{Schr\"odinger. Centenary Celebration of a Polymath.} 
	Cambridge University Press (1987). 
	Reprinted as chapter 22 of \cite{Bell87b}.
  
\bibitem{Bell87b} J.S. Bell: 
	\textit{Speakable and unspeakable in quantum mechanics}. 
	Cambridge University Press (1987)

\bibitem{BGG95} F. Benatti, G.C. Ghirardi, and R. Grassi: 
	Describing the macroscopic world: closing the circle within
	the dynamical reduction program.
	\textit{Foundations of Physics} \textbf{25}: 5--38 (1995)


\bibitem{Fay02} F. Dowker and J. Henson: 
	Spontaneous Collapse Models on a Lattice. 
	\textit{Journal of Statistical Physics} \textbf{115}: 1327--1339 (2004)
	\url{http://arxiv.org/abs/quant-ph/0209051}

\bibitem{DP03} D. D\"urr and P. Pickl: 
	Flux-Across-Surfaces Theorem for a Dirac Particle.
	\textit{Journal of Mathematical Physics} \textbf{44}: 423--456 (2003)
	\url{http://arxiv.org/abs/math-ph/0207010}

\bibitem{Ghi99} G.C. Ghirardi:
	Some Lessons From Relativistic Reduction Models.
	Pages 117--152 in H.-P. Breuer and F. Petruccione (editors),
	{\it Open Systems and Measurement in Relativistic Quantum Theory},
	Heidelberg: Springer (1999)

\bibitem{GRW86} G.C. Ghirardi, A. Rimini, and T. Weber: 
	Unified dynamics for microscopic and macroscopic systems. 
	\textit{Physical Review D} \textbf{34}: 470--491 (1986)

\bibitem{GJ68} J. Glimm and A. Jaffe:
	A $\lambda \phi^4$ Quantum Field Theory without Cutoffs. I.
	{\it Physical Review} {\bf 176}: 1945--1951 (1968)


\bibitem{Lie15a} M.~Lienert:
	A relativistically interacting exactly solvable multi-time 
	model for two mass-less Dirac particles in 1+1 dimensions.
	{\it Journal of Mathematical Physics} {\bf 56}: 042301 (2015)
	\url{http://arxiv.org/abs/1411.2833}

\bibitem{Lie15c} M.~Lienert:
	{\it Lorentz invariant quantum dynamics in the multi-time formalism.}
	Ph.D.~thesis, Mathematics Institute, 
	Ludwig-Maximilians University, Munich, Germany (2015)

\bibitem{LN15} M.~Lienert and L.~Nickel:
	A simple explicitly solvable interacting relativistic 
	$N$-particle model.
	{\it Journal of Physics A: Mathematical and Theoretical} 
	{\bf 48}: 325301 (2015) 
	\url{http://arxiv.org/abs/1502.00917}

\bibitem{LT19} M. Lienert and R. Tumulka:
	Born's Rule For Arbitrary Cauchy Surfaces.
	To appear in {\it Letters in Mathematical Physics} (2020)
	\url{http://arxiv.org/abs/1706.07074}


\bibitem{Pe90} P. Pearle: 
	Toward a Relativistic Theory of Statevector Reduction. 
	Pages 193--214 in A.I.~Miller (editor),
	\textit{Sixty-Two Years of Uncertainty: Historical,
	Philosophical, and Physical Inquiries into the Foundations of
	Quantum Physics}, 
	NATO ASI Series B {\bf 226}, 
	New York: Plenum Press (1990)

\bibitem{Pe99} P. Pearle: 
	Relativistic Collapse Model With Tachyonic Features. 
	\textit{Physical Review A} \textbf{59}: 80--101 (1999) 
	\url{http://arxiv.org/abs/quant-ph/9902046}

\bibitem{Pe15} P. Pearle:
	Relativistic dynamical collapse model.
	{\it Physical Review D} {\bf 91}: 105012 (2015)
	\url{http://arxiv.org/abs/1412.6723}
	
\bibitem{Til} A. Tilloy:
	Interacting quantum field theories as relativistic statistical field theories of local beables.
	Preprint (2017)
	\url{http://arxiv.org/abs/1702.06325}

\bibitem{Tum06} R. Tumulka:
	A relativistic version of the Ghirardi-Rimini-Weber model.
    \textit{Journal of Statistical Physics} \textbf{125}: 821--840 (2006) 
    \url{http://arxiv.org/abs/quant-ph/0406094} 

\bibitem{Tum06c} R. Tumulka:
	Collapse and Relativity.
	Pages 340--352 in A.~Bassi, D.~D\"urr, T.~Weber, and N.~Zangh\`\i\ (editors),
  \textit{Quantum Mechanics: Are there Quantum Jumps? and
  On the Present Status of Quantum Mechanics}, 
  AIP Conference Proceedings \textbf{844},
	American Institute of Physics (2006)
	\url{http://arxiv.org/abs/quant-ph/0602208}

\bibitem{Tum07} R. Tumulka:
	Comment on ``The Free Will Theorem.''
	\textit{Foundations of Physics} \textbf{37}: 186--197 (2007)
	\url{http://arxiv.org/abs/quant-ph/0611283}

\bibitem{Tum09} R. Tumulka:
	The Point Processes of the GRW Theory 
	of Wave Function Collapse.	
	\textit{Reviews in Mathematical Physics} \textbf{21}: 155--227 (2009)
	\url{http://arxiv.org/abs/0711.0035}

\bibitem{Tum18} R. Tumulka:
	Paradoxes and Primitive Ontology in Collapse Theories of Quantum Mechanics.
	Pages 134--153 in S. Gao (editor), \textit{Collapse of the Wave Function}, Cambridge University Press (2018)
	\url{http://arxiv.org/abs/1102.5767}

\end{thebibliography}
\end{document}